\newtheorem{theorem}{Theorem}
\newcommand{\q}[1]{\vert #1 \rangle}
\newcommand{\qd}[1]{\langle #1 \vert}
\begin{document}

\title{Discrete-time reservoir engineering\\ with entangled bath and stabilizing squeezed states}

\author{Zibo Miao\footnote{Both authors are with QUANTIC lab, INRIA Paris, 2 rue Simone Iff, 75012 Paris, France. This research is carried out within Paris Sciences Lettres (PSL) Research University. AS is also with Electronics and Information Systems Department, Ghent University, Belgium. Corresponding author: zibo.miao@inria.fr}
~and Alain Sarlette$^\ast$}

%%%%%%%%%%%%%%%%
%%  ABSTRACT  %%   --  NEW
%%%%%%%%%%%%%%%%
\maketitle

\begin{abstract}
This theoretical proposal investigates how resonant interactions occurring when a harmonic oscillator is fed with a stream of entangled qubits allow us to stabilize squeezed states of the harmonic oscillator. We show that the properties of the squeezed state stabilized by this engineered reservoir, including the squeezing strength, can be tuned at will through the parameters of the ``input'' qubits, albeit in tradeoff with the convergence rate. We also discuss the influence of the type of entanglement in the input, from a pairwise case to a more widely distributed case. This paper can be read in two ways: either as a proposal to stabilize squeezed states, or as a step towards treating quantum systems with time-entangled reservoir inputs.
\end{abstract}

% insert suggested PACS numbers in braces on next line
%\pacs{02.30.Yy,03.65.Yz,05.45.Xt,89.20.Kk,03.65.Ta}

%02.30.Yy	Control theory
%05.45.Xt   	Synchronization; coupled oscillators
%03.65.Ta	Foundations of quantum mechanics; measurement theory (for optical tests of quantum theory, see 42.50.Xa)
%03.65.Yz	Decoherence; open systems; quantum statistical methods (see also 03.67.Pp in quantum information; for decoherence in Bose-Einstein condensates, see 03.75.Gg)
%89.20.Kk	Engineering (for electrochemical engineering, see 82.47.Wx; for biomedical engineering, see 87.85.-d; for reservoir engineering in geothermal energy, see 88.10.G-; for nuclear engineering, see 28.00.00)

%\keywords{quantum reservoir engineering, squeezed coherent states, entanglement}

%%%%%%%%%%%%%
%%  INTRO  %%   --  NEW, FOR ZIBO
%%%%%%%%%%%%%

\section{Introduction}
\label{sec:Intro}
To stabilize a quantum system at a desired state plays a significant role in quantum information technology. However, the rather short dynamical time-scales of most quantum systems impose important limitations on the complexity of instantaneous output signal analysis and retroaction. An alternative control approach for quantum state stabilization, which bypasses a real-time analysis of output signals, is referred to as quantum reservoir engineering. A reservoir is designed such that, when acting alone, it drives the system initialized at arbitrary states, to a target state of the system coupled to the reservoir \cite{WHZ81,CMMRD01,ZWH03}. Reservoir engineering proposals have been developed both in continuous-time and discrete-time settings.
In continuous-time for example, a reservoir built from a combination of smartly chosen drives ensures the confinement of a system to a manifold of quantum states, spanned by the coherent superpositions of a few different coherent states \cite{LTP15}.
In the discrete-time case, a popular line of experiments uses a reservoir consisting of a stream of atoms consecutively interacting with a trapped field in order to stabilize the field's state. Such schemes have been devised to stabilize, without requiring any feedback action, a variety of states including squeezed states \cite{WDD08,SGDL07}, and so-called Schr\"odinger cat states \cite{SRBR11,SLBRR12}. In all these scenarios, the reservoirs are viewed as providing independent inputs at consecutive time-steps, progressively evacuating entropy to drive the system towards a target state.
In this paper, we consider a similar system setting to \cite{SRBR11,SLBRR12}, which is generally analogous to the Haroche experiment setting \cite{SDZ11} but with no measurement-based feedback involved. As shown in Fig. \ref{fig:GREng}, the reservoir consists of a sequence of input qubits, with the aim of manipulating and stabilizing the state of a harmonic oscillator indirectly via coupling with these qubits. The novelty of the present work is to investigate the effects of possibly \emph{entangling these consecutive inputs in time} before they interact with the oscillator. This idea draws upon \cite{GHK17}, where an abstract framework for bath-mediated controllability is formulated without assuming the qubits to be independent.

The motivation for this investigation is twofold.
First, by considering the bath qubits as a ``control input'', we want to establish the additional ``quantum power''  that is enabled by having \emph{entanglement in an input signal}, showing how it can be used to effectively improve the system performance. The result turns out to revolve around the stabilization of squeezed states of the harmonic oscillator mode. Second, we are seeking to develop general methods to analyze explicitly and efficiently the behavior of quantum systems under time-entangled inputs.

The relation between entangled qubits and field squeezing should not be too surprising. It is well-known that passing squeezed light through a 50/50 beamsplitter produces entangled outputs \cite{RMD89,TRK92}.
It has been extensively studied how coherent amplitude, relative phase and the squeezing of an optical field influence the quantum correlations (e.g. entanglement) between coupled qubits \cite{LAJP00,CP03,SDCZ01}. 
We here show that conversely, entangled input qubits stabilize a squeezed state of the oscillator. Since a squeezed state exhibits reduced noise below the vacuum level in one quadrature component of the field at the expense of amplified fluctuations in another component, it can be used to improve the signal-to-noise ratio, as a key step in the development of quantum communication \cite{VMC08,GDD13}.
Squeezing also notably enables quantum-enhanced metrology. For instance, sensitive interferometric and spectroscopic measurements are implemented by using the quadrature with reduced quantum noise of a propagating squeezed field state as a pointer for the measurement of weak signals, most notably in the last updates of gravitational wave detectors \cite{ATT14,MS90,CCM81,AAA13}. The present paper instead considers a trapped field, and \emph{stabilizes} it in a squeezed state. In such a setting, the reduced signal-to-noise ratio on one quadrature can be used to improve the sensitivity of this ``static probe'' to phenomena whose effect would be detected as a displacement along the squeezed direction -- e.g.~detecting electromagnetic driving fields acting on the probe, or improving readout schemes as in \cite{DBB15}.

In view of the importance of squeezed states, our observation of the possibility to stabilize strongly squeezed states thanks to an entangled qubits bath, might be worth noting for its own sake. Squeezed states on different physical platforms can be generated in so many different ways, that we have to refer the reader to dedicated work for a comprehensive review of both theoretical proposals and experimental realizations. The foremost domain is optics. It is important to distinguish the \emph{generation} of squeezed states from a given initial state, and the somewhat stronger achievement of \emph{stabilizing} squeezed states from any initial state. Protocols for the latter have recently been proposed in \cite{KMC13,DQB14,MM15J}, for a trapped electromagnetic field mode, and realized experimentally in \cite{PDB15}. 
Our paper likewise studies stabilization, but unlike in those papers, our goal here is to highlight the role played by entanglement in the bath states under a simple resonant interaction scheme.

From the viewpoint of investigating the possible consequences of time-entangled reservoir inputs, we thus establish the generation of squeezed states as a genuinely quantum effect. Indeed, since the direction of squeezing depends on the phase of the entangled qubits, replacing entanglement by a classical correlation would just lead to a thermally mixed steady state of the harmonic oscillator. We carry out an analysis in two cases. First, in a genuinely discrete-time setting (Section \ref{sec:pairs}), we analyze in detail the behavior of the oscillator interacting with pairwise entangled input qubits. We show how tuning their initial parameters allows, in principle, any squeezed state to be stabilized, but with the tradeoff that more steady-state squeezing implies slower convergence. We investigate the effect of perturbations on this scheme and give conditions under which a squeezing of the order of the benchmark 10dB squeezing factor \cite{VMC08} could be achieved. Second, in Section \ref{sec:interp} we investigate the case where the stream of qubits is weakly, locally entangled  and interacts weakly with the cavity. The methods and results established in this context are intended to suggest how to analytically approach the limit of continuous-time inputs.

\begin{figure}[!htp]
\centering
\includegraphics[scale=.6]{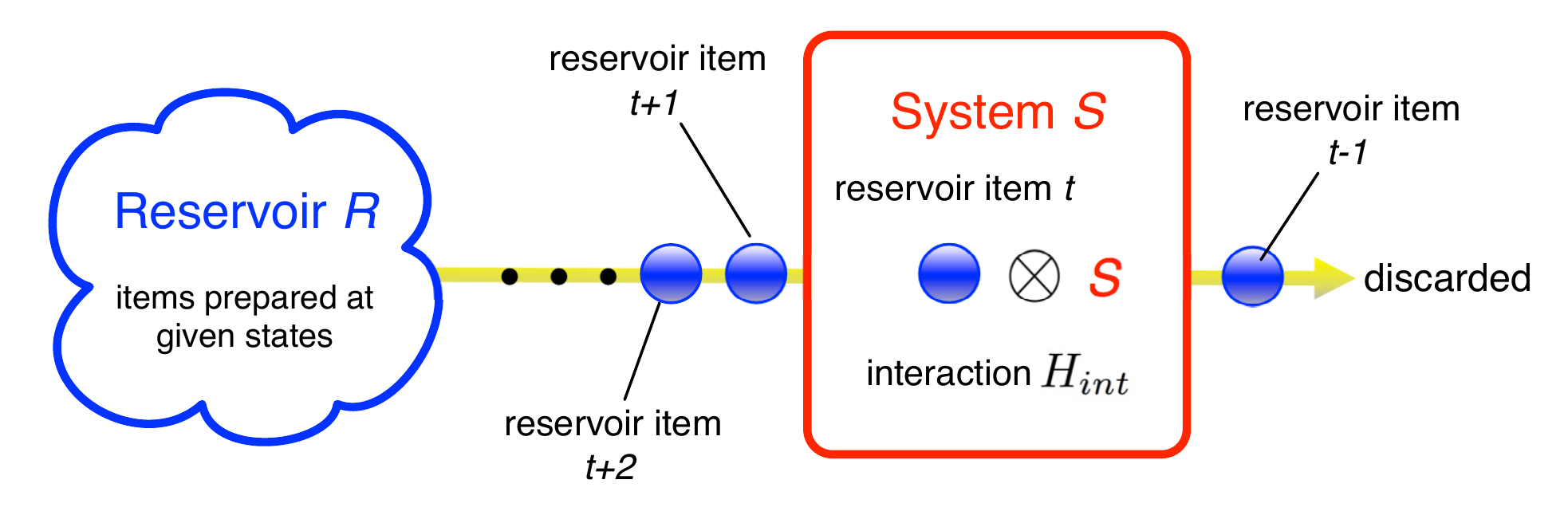}
\caption{(Colour online) Framework of quantum reservoir engineering. The aim is to stabilize the system S at a target state by coupling it to another quantum system R, referred to as the reservoir, which is viewed as a stream of input quantum states that are discarded after interaction. The novelty of the present work, unlike previous work where the reservoir is considered to consist of independent subsystems, is to establish the effect of entanglement among the reservoir items.}
\label{fig:GREng}
\end{figure}
%AS: the purpose here is precisely NOT to play with the H_int, but only with the inputs.

%%%%%%%%%%%%%
%%  SETUP  %%   --  NEW, FOR ZIBO
%%%%%%%%%%%%%

\section{Discrete-time engineered reservoir setup}

%C: General setup
%As shown in figure \ref{HarocheSetup}, 
As our general setup we consider, shown in Fig. \ref{fig:GREng}, a single harmonic oscillator mode of a cavity interacting resonantly with a sequence of qubits. Each qubit can be prepared in any desired state before entering the cavity, then interacts there with the stored oscillator mode, before exiting the cavity and essentially being lost while the next qubit starts its interaction. The stream of qubits then plays the role of a (discrete-time) engineered reservoir to control the oscillator's quantum state. This setup can be considered in the context of the ENS cavityQED type of experiments \cite{HR06B,DDC08,DAM93,BSM96,SDZ11}, 
where the harmonic oscillator mode corresponds to a standing wave of the electromagnetic field and the qubits states are Rydberg states of Rubidium atoms. We will sometimes use this more specific language for the sake of clarity, although the theoretical conclusions, of course, apply to other spin-spring systems, like in circuitQED \cite{WSB04,VKL13,RLSDM15} 
or possibly with trapped ions or mechanical oscillators \cite{PCZ96,MM07,BW08,HGSBBL14,PDB15,WLW15}.
It is now well-known that a short resonant interaction with a stream of \emph{independent} identical, weakly excited qubits can stabilize a coherent state inside the cavity, whereas more complicated interactions can stabilize e.g.~cat states of the field
\cite{SLBRR12}. The aim of the present paper is rather to investigate the (beneficial) effect of entanglement among the consecutive qubits. 

In more concrete terms, we consider the qubit-oscillator interaction as a fixed part of the experiment, which can later be fed with a ``controlled input'' stream of qubits. Each qubit will interact with the cavity for a fixed time $t_r$ according to the Jaynes-Cummings Hamiltonian
\begin{align}
\mathbf{H}_{JC} = i\frac{\Omega}{2}( | g \rangle \langle e |\mathbf{a}^{\dagger} - | e \rangle \langle g |\mathbf{a})
\end{align}
with $\Omega$ the effective qubit-oscillator coupling strength (Rabi oscillation frequency), $\mathbf{a}$ the oscillator mode's annihilation operator and $\q{g},\q{e}$ the qubit's ground and excited states. The unitary propagator describing one qubit-oscillator interaction is then
\begin{align}\label{eq:ResInt}
\mathbf{U}_r = | g \rangle \langle g |\cos\theta_{\mathbf{N}} +  | e \rangle \langle e |\cos\theta_{\mathbf{N+I}} - | e \rangle \langle g | \mathbf{a}\frac{\sin\theta_{\mathbf{N}}}{\sqrt{\mathbf{N}}} + | g \rangle \langle e |\frac{\sin\theta_{\mathbf{N}}}{\sqrt{\mathbf{N}}}\mathbf{a}^{\dagger}
\end{align}
where 
\begin{align}
\theta_{\mathbf{N}} = \theta\sqrt{\mathbf{N}}= \tfrac{1}{2}\Omega t_r \sum_n\sqrt{n}| n \rangle \langle n |, \;
\end{align}
with $\mathbf{N} = \mathbf{a}^\dagger \mathbf{a}$ the photon number operator, $\q{n}: n=0,1,2,...$ the Fock states of the harmonic oscillator mode, and $\mathbf{I}$ the identity operator. The joint state of the qubit-oscillator system after interaction is in general entangled. For a pure state $\q{\psi_c}$ of the oscillator before interaction, it can be described by
\begin{equation}\label{eq:factorit2}
\q{g'} \otimes \mathbf{M}_{g'} \q{\psi_c} + \q{e'} \otimes \mathbf{M}_{e'} \q{\psi_c}
\end{equation}
where $\q{g'},\q{e'}$ are any two orthonormal states of the qubit and $\mathbf{M}_{g'},\mathbf{M}_{e'}$ are operators on the harmonic oscillator satisfying $\mathbf{M}_{g'}^\dagger \mathbf{M}_{g'} + \mathbf{M}_{e'}^\dagger \mathbf{M}_{e'} = \mathbf{I}$. As the qubit state is dismissed after interaction, the latter operators express the (expected) evolution of the cavity by the Kraus map:
$$\rho_c \quad \mapsto \quad \mathbf{M}_{g'} \rho_c \mathbf{M}_{g'}^\dagger + \mathbf{M}_{e'} \rho_c \mathbf{M}_{e'}^\dagger \;, $$
where $\rho_c$ is the density operator describing the harmonic oscillator's state.

When this setup is fed with a stream of independent qubits labeled $t=1,2,3,...$, the cavity state thus undergoes the evolution
$$\rho_c(t+1) = \mathbf{M}_{g'}(t) \rho_c(t) \mathbf{M}_{g'}^\dagger(t) + \mathbf{M}_{e'}(t) \rho_c(t) \mathbf{M}_{e'}(t)^\dagger \; .$$
By varying in time the state of the ``input'' qubits before interaction, the operators $\mathbf{M}_{g'},\mathbf{M}_{e'}$ become time-dependent and propagate $\rho_c$. In the case of short resonant interaction, from the computations of \cite{SLBRR12}, at least for smooth variations in the input qubit states, this would allow us to stabilize the oscillator at a time-dependent coherent state, and no more. When the stream of qubits features classical correlations in time, it is clear that the oscillator will evolve as a classical (i.e.~non-coherent) superposition of coherent states, which is of no particular interest. When the stream of qubits features entanglement in time, the story becomes more interesting, as we show in the remainder of this paper.

%%%%%%%%%%%%%%%%
%%  2 QUBITS  %%   --  NEW, few obvious FOR ZIBO
%%%%%%%%%%%%%%%%

\section{Resonant interaction with entangled qubit pairs}\label{sec:pairs}

We start with the situation where the stream of qubits is pairwise entangled. That is, the input to the cavity in fact consists of a stream of independent qubit pairs initialized in the general state
\begin{align}
\q{\psi_{q^2}} = \beta_{gg}|gg\rangle + \beta_{ge}|ge\rangle + \beta_{eg}|eg\rangle + \beta_{ee}|ee\rangle,
\label{eq:qubitis}
\end{align}
with $\beta_{gg},\beta_{ge},\beta_{eg},\beta_{ee}$ taking possibly time-dependent values in $\mathbb{C}$. Such a state can be prepared with a (partial) CNOT gate between the two qubits, possibly mediated through an auxiliary system. In the ENS cavityQED experiment, for instance \cite{HR06B,DDC08,DAM93,BSM96,SDZ11}, one might use an auxiliary cavity to entangle the flying qubits. The two qubits constituting an entangled pair still interact sequentially with the oscillator according to \eqref{eq:ResInt}. To obtain a Markovian evolution for the oscillator state, we must consider the result of its interaction \emph{with the qubit pair}. 
The latter is thus considered as one effective auxiliary system, which undergoes two consecutive Hamiltonian interactions with the oscillator (first Hamiltonian coupling with the subspace corresponding to first qubit, then with the subspace corresponding to second qubit), and the corresponding propagator is obtained as an obvious extension of $\mathbf{U_r}$. The evolution of the oscillator state over these \emph{two} interactions can then be described by the Kraus map:
\begin{align}
\rho_c(t+1) &= \mathbf{M}_{g'g'} \rho_c(t) \mathbf{M}_{g'g'}^\dagger + \mathbf{M}_{g'e'} \rho_c(t) \mathbf{M}_{g'e'}^\dagger + \mathbf{M}_{e'g'} \rho_c(t) \mathbf{M}_{e'g'}^\dagger + \mathbf{M}_{e'e'} \rho_c(t) \mathbf{M}_{e'e'}^\dagger \; . 
\label{eq:rhodyns} \end{align}
In the basis $(\q{g'},\q{e'}) = (\q{g},\q{e})$ for both qubits, the associated operators write:
\begin{align}
M_{gg} = & \beta_{gg}\cos^{2}\theta_{\mathbf{N}}+\beta_{ge}\frac{\cos\theta_{\mathbf{N}}\sin\theta_{\mathbf{N}}\mathbf{a}^{\dagger}}{\sqrt{\mathbf{N}}} +\beta_{eg}\frac{\sin\theta_{\mathbf{N}}\mathbf{a}^{\dagger}\cos\theta_{\mathbf{N}}}{\sqrt{\mathbf{N}}} +\beta_{ee}\frac{\left(\sin\theta_{\mathbf{N}}\mathbf{a}^{\dagger}\right)^{2}}{\mathbf{N}},\label{eq:M2qubits}\\ \nonumber
M_{ge} = &-\beta_{gg}\frac{\cos\theta_{\mathbf{N}}\mathbf{a}\sin\theta_{\mathbf{N}}}{\sqrt{\mathbf{N}}} +\beta_{ge}\cos\theta_{\mathbf{N}}\cos\theta_{\mathbf{N+I}} -\beta_{eg}\sin^2\theta_{\mathbf{N}} + \beta_{ee}\frac{\sin\theta_{\mathbf{N}}\mathbf{a}^{\dagger}\cos\theta_{\mathbf{N+I}}}{\sqrt{\mathbf{N}}},\\ \nonumber
M_{eg} = &-\beta_{gg}\frac{\mathbf{a}\sin\theta_{\mathbf{N}}\cos\theta_{\mathbf{N}}}{\sqrt{\mathbf{N}}}
-\beta_{ge}\frac{\sin^2\theta_{\mathbf{N+I}}(\mathbf{N+I})}{\mathbf{N}}\\ \nonumber
&+\beta_{eg}\cos\theta_{\mathbf{N+I}}\cos\theta_{\mathbf{N}}+\beta_{ee}\frac{\cos\theta_{\mathbf{N+I}}\sin\theta_{\mathbf{N}}\mathbf{a}^{\dagger}}{\sqrt{\mathbf{N}}},\\ \nonumber
M_{ee} = &\beta_{gg}\frac{\left(\mathbf{a}\sin\theta_{\mathbf{N}}\right)^{2}}{\mathbf{N}}-\beta_{ge}\frac{\mathbf{a}\sin\theta_{\mathbf{N}}\cos\theta_{\mathbf{N+I}}}{\sqrt{\mathbf{N}}}-\beta_{eg}\frac{\cos\theta_{\mathbf{N+I}}\mathbf{a}\sin\theta_{\mathbf{N}}}{\sqrt{\mathbf{N}}}+\beta_{ee}\cos^{2}\theta_{\mathbf{N+I}}. \nonumber
\end{align}

%%  Analysis: Pairs
%%%%%%%%%%%%%%%%%%%%%

\subsection{Approximate analysis}

Assuming $\theta \ll 1$, we expand this Kraus map to second order in $\theta$ and we observe that it appears to be the discretization of a Lindblad master equation with 3 dissipation channels:
\begin{align}
\tfrac{d}{d\tau}\rho_c(\tau) =-i\left[\mathbf{H},\rho_c(\tau)\right]+\sum_{j=1}^3\mathcal{D}\left(\mathbf{L}_{j}\right)\rho_c\left(\tau\right),
\label{eq:LindM}
\end{align}
with $\mathcal{D}(\mathbf{L})\rho_c=\mathbf{L}\rho_c\mathbf{L}^{\dagger}-\frac{1}{2}(\mathbf{L}^{\dagger}\mathbf{L}\rho_c+\rho_c \mathbf{L}^{\dagger}\mathbf{L})$, and operators
\begin{align}
\mathbf{H} &= -i\theta\left(\mathbf{Q}-\mathbf{Q}^{\dagger}\right),\nonumber\\
\mathbf{Q} &=\left[\beta_{gg}\left(\beta_{ge}^{\ast}+\beta_{eg}^{\ast}\right)+\beta_{ee}^{\ast}\left(\beta_{ge}+\beta_{eg}\right)\right]\mathbf{a},\nonumber\\
\mathbf{L}_1 &= \sqrt{2}\theta\left(\beta_{gg}\mathbf{a}  - \beta_{ee}\mathbf{a}^\dagger\right),\nonumber\\
\mathbf{L}_2 &= \theta\left(\beta_{ge}+\beta_{eg}\right)\mathbf{a},\nonumber\\
\mathbf{L}_3 &= \theta\left(\beta_{ge}+\beta_{eg}\right)\mathbf{a}^\dagger \; .
\label{eq:2appie}
\end{align}
The adimensional time $\tau$ corresponds to the duration of the interaction with one pair of qubits. The decoherence operators $\mathbf{L}_2$ and $\mathbf{L}_3$ describe a purely thermal bath at infinite temperature; this simply has the effect of stabilizing a high-energy thermal mixture of coherent states. However by taking $|\beta_{eg}|,|\beta_{ge}| \ll 1$, we can make the coupling to this thermal reservoir negligible leaving the dominant terms $\mathbf{L}_1$ and $\mathbf{H}$, which involve $\theta$ whereas the dissipation super-operator is in $\theta^2$. In this regime, the reservoir can be tuned to stabilize any minimum-uncertainty squeezed state
\begin{align*}
\q{\alpha,\zeta=r e^{i\phi_r}} & =\mathbf{D}\left(\alpha\right) \mathbf{S}\left(\zeta\right) \q{0} \quad \text{where}\\
& \mathbf{D}\left(\alpha\right)=\exp\left(\alpha \mathbf{a}^{\dagger}-\alpha^{\ast}\mathbf{a}\right) \; ,\\
& \mathbf{S}\left(\zeta\right)=\exp\left(\tfrac{1}{2}(\zeta^{\ast}\mathbf{a}^2-\zeta(\mathbf{a}^{\dagger})^{2})\right)
\end{align*}
are respectively the displacement operator by $\alpha = |\alpha|e^{i\phi_\alpha} \in \mathbb{C}$ and the squeezing of the vacuum by $|\zeta|=r$ in a direction $\phi_r$. That is, denoting $\mathbf{X_\phi} = \frac{\mathbf{a}e^{i\phi}+\mathbf{a}^\dagger e^{-i\phi}}{2}$ the oscillator quadrature in direction $\phi$ (we follow the convention from \cite{HR06B}), the corresponding variances satisfy
\begin{align*}
& (\Delta\mathbf{X}_{\frac{\phi_r}{2}}) (\Delta\mathbf{X}_{\frac{\phi_r+\pi}{2}}) = \frac{1}{4}, \\
& (\Delta\mathbf{X}_{\frac{\phi_r}{2}}) = \tfrac{1}{2}\, e^{-r}
\end{align*}
for $\q{\psi} = \mathbf{S}\left(r\right) \q{0}$. Thus such state saturates the Heisenberg uncertainty inequality, with less uncertainty on $\mathbf{X}_{\phi_r/2}$ than a classical-like state like the vacuum $\q{0}$.
\vspace{2mm}

%% Theorem
\begin{theorem}
Consider the Lindblad master equation
\begin{align}
\tfrac{d}{dt}\rho_c(\tau) =-i\left[\mathbf{H},\rho_c(\tau)\right]+\mathcal{L}\left(\mathbf{L}_{1}\right)\rho_c\left(\tau\right)
\label{eq:4thm1}
\end{align}
which describes, according to approximations just discussed, the engineered reservoir obtained through resonant interaction of a harmonic oscillator with a stream of consecutive entangled qubit pairs initialized in the state \eqref{eq:qubitis} before interaction. This Lindblad master equation stabilizes the squeezed state $\q{\alpha,r e^{i\phi_r}}$ provided we initialize the qubit pairs as
\begin{align*}
& \beta_{gg} = \cos\epsilon\cos u \;\;,\;\; \beta_{ee} = e^{i\mu}\cos\epsilon\sin u \;\; ,\\
& \beta_{eg}=\beta_{ge}=e^{i\chi}\sin\epsilon /\sqrt{2} \; .
\end{align*}
with the parameters tuned as:
\begin{align}
\mu &= \phi_r,\label{eq:findmu} \\
\tan u &= -\tanh r\;, \;\;\; u\in (-\pi/4,\pi/4), \label{eq:findu}\\
\tan \chi &= \frac{\sin \phi_{\alpha}-\tanh r \sin \left(\phi_{\alpha} - \phi_r\right)}{\cos \phi_{\alpha} + \tanh r  \cos \left( \phi_{\alpha} - \phi_r\right)},\label{eq:findvt} \\
\epsilon &= \frac{\theta|\alpha|}{\sqrt{2}}\sqrt{\frac{\left(1+\sin(2u)\right)\left(1-\sin(2u)\right)}{1+\cos(\phi_{r} - 2\chi)\sin (2u)}}.\label{eq:findet}
\end{align}
The convergence rate towards $\q{\alpha,re^{i\phi_r}}$ is
\begin{align}
\kappa = 2\theta^2\frac{1-\tanh^2 |r|}{1+\tanh^2 |r|}.
\label{eq:findka}
\end{align}
\label{thm:con}
\end{theorem}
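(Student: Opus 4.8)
The plan is to exploit that \eqref{eq:4thm1} is a Gaussian semigroup: since $\mathbf Q$ in $\mathbf H$ is proportional to $\mathbf a$, the operator $\mathbf H$ is linear in $\mathbf a,\mathbf a^\dagger$, and $\mathbf L_1=\sqrt2\,\theta(\beta_{gg}\mathbf a-\beta_{ee}\mathbf a^\dagger)$ is linear and homogeneous; one therefore expects the unique stationary state to be a squeezed coherent state, and the cleanest route is to conjugate everything by $\mathbf V=\mathbf D(\alpha)\,\mathbf S(re^{i\phi_r})$ so that the target $\q{\alpha,re^{i\phi_r}}$ becomes the vacuum $\q0$. Using $\mathbf V^\dagger\mathbf a\mathbf V=\mathbf a\cosh r-e^{i\phi_r}\mathbf a^\dagger\sinh r+\alpha$ (and its adjoint), the conjugated operators stay linear, and I would impose two requirements: (i) $\mathbf V^\dagger\mathbf L_1\mathbf V$ has no $\mathbf a^\dagger$ part, so that after absorbing its constant part into a linear Hamiltonian correction $\delta\mathbf H$ the dissipator becomes pure loss $|c_0|^2\,\mathcal D(\mathbf a)$; and (ii) the residual linear Hamiltonian $\mathbf V^\dagger\mathbf H\mathbf V+\delta\mathbf H$ has no $\mathbf a$ or $\mathbf a^\dagger$ part, which (a one-line computation: $-i[\,\cdot\,,\q0\qd0]$ vanishes iff the $\mathbf a$-coefficient is zero, Hermiticity then handling $\mathbf a^\dagger$) is exactly what makes $\q0$ stationary. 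These two conditions pin down the four real parameters $\mu,u,\chi,\epsilon$.

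For requirement (i): the $\mathbf a^\dagger$-coefficient of $\mathbf V^\dagger(\beta_{gg}\mathbf a-\beta_{ee}\mathbf a^\dagger)\mathbf V$ is $-\beta_{gg}e^{i\phi_r}\sinh r-\beta_{ee}\cosh r$, which vanishes iff $\beta_{ee}/\beta_{gg}=-e^{i\phi_r}\tanh r$; substituting the ansatz gives $-e^{i\mu}\tan u=e^{i\phi_r}\tanh r$, i.e.\ precisely \eqref{eq:findmu}--\eqref{eq:findu}, with $u\in(-\pi/4,\pi/4)$ forced by $|\tan u|=|\tanh r|<1$. Equivalently $\mathbf L_1=c_0\,\mathbf b$ where $\mathbf b:=\mathbf S(re^{i\phi_r})\,\mathbf a\,\mathbf S(re^{i\phi_r})^\dagger=\mathbf a\cosh r+e^{i\phi_r}\mathbf a^\dagger\sinh r$ annihilates $\mathbf S(re^{i\phi_r})\q0$, so that $\mathbf b\q{\alpha,re^{i\phi_r}}=\gamma\q{\alpha,re^{i\phi_r}}$ with $\gamma=\alpha\cosh r+\bar\alpha e^{i\phi_r}\sinh r$, and $c_0=\sqrt2\,\theta\beta_{gg}/\cosh r$; correspondingly $\mathbf V^\dagger\mathbf L_1\mathbf V=c_0(\mathbf a+\gamma)$.

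For requirement (ii): writing $\mathbf Q=q\,\mathbf a$, the $\mathbf a$-coefficient of $\mathbf V^\dagger\mathbf H\mathbf V$ plus the correction $\delta\mathbf H$ produced when absorbing the constant $c_0\gamma$ is a fixed linear combination of $q,\bar q,\gamma,\bar\gamma$; setting it to zero yields, after conjugating and dividing by $\cosh r$, the equation $\theta\bigl(\bar q+q\,e^{i\phi_r}\tanh r\bigr)=|c_0|^2\gamma/(2\cosh r)$. The key simplification is that the symmetric choice $\beta_{ge}=\beta_{eg}=e^{i\chi}\sin\epsilon/\sqrt2$, together with \eqref{eq:findu} in the form $\tanh r=-\tan u$, collapses the left-hand side to $\sqrt2\,\sin\epsilon\cos\epsilon\,\frac{\cos2u}{\cos u}\,e^{i\chi}$; inserting $|c_0|^2=2\theta^2\cos^2\epsilon\cos^2u/\cosh^2r$ and cancelling $\theta\cos\epsilon$ leaves
\[
\tan\epsilon\;e^{i\chi}\;=\;\frac{\theta\cos^3u}{\sqrt2\,\cos2u\,\cosh^3r}\;\gamma .
\]
Since the prefactor is a positive real (here $\cos2u>0$ is used), this separates into the phase equation $\chi=\arg\gamma=\arg(\alpha+\bar\alpha e^{i\phi_r}\tanh r)$, which is \eqref{eq:findvt}, and the modulus equation $\tan\epsilon=\theta\cos^3u\,|\gamma|/(\sqrt2\,\cos2u\,\cosh^3r)$. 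To recognize the latter as \eqref{eq:findet} I would substitute $\cosh r=\cos u/\sqrt{\cos2u}$ (so $\cos^3u/\cosh^3r=(\cos2u)^{3/2}$), use $|\gamma|^2=|\alpha|^2(\cosh2r+\sinh2r\cos(2\phi_\alpha-\phi_r))$ with $\cosh2r=\sec2u$ and $\sinh2r=-\tan2u$, re-express $\cos(2\phi_\alpha-\phi_r)$ through $\cos(\phi_r-2\chi)$ via $\gamma^2=|\gamma|^2e^{2i\chi}$, apply $(1+\sin2u)(1-\sin2u)=\cos^2 2u$, and finally replace $\tan\epsilon$ by $\epsilon$ --- legitimate because $\epsilon=O(\theta)$, which is exactly the regime in which the truncation leading to \eqref{eq:4thm1} is valid. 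This paragraph is where essentially all the work sits, and the main obstacle is purely the trigonometric bookkeeping and the change of variables between $(r,\phi_r,\phi_\alpha)$ and $(u,\chi,\epsilon)$; I would not expect any conceptual difficulty.

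Finally, once (i)--(ii) hold the conjugated generator is exactly $|c_0|^2\,\mathcal D(\mathbf a)$ up to an irrelevant c-number in the Hamiltonian --- pure photon loss, which drives \emph{every} state to $\q0$; hence \eqref{eq:4thm1} drives every state to $\q{\alpha,re^{i\phi_r}}$, which is therefore the unique stationary state. For the rate, $\mathcal D(\mathbf a)$ relaxes first moments at half the loss rate and second moments (hence the infidelity with the target and the excess mean photon number) at the full loss rate $|c_0|^2=2\theta^2\cos^2\epsilon\,(1-\tanh^2r)/(1+\tanh^2r)$, using $\cos^2u/\cosh^2r=(1-\tanh^2r)/(1+\tanh^2r)$; at leading order $\cos^2\epsilon\to1$ and this is \eqref{eq:findka}.
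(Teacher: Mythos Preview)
Your proposal is correct and follows essentially the same route as the paper: both conjugate the Lindblad generator by $\mathbf V=\mathbf D(\alpha)\mathbf S(re^{i\phi_r})$ and show that, with the stated parameter choices, the transformed equation reduces to pure photon loss $\kappa\,\mathcal D(\mathbf a)$ on $\tilde\rho_c=\mathbf V^\dagger\rho_c\mathbf V$. The only difference is one of presentation: the paper simply asserts that ``a few straightforward computations, with the parameter tuning provided in the statement'' yield \eqref{eq:LindMvac}, whereas you actually \emph{derive} the tuning by imposing (i) that the conjugated $\mathbf L_1$ lose its $\mathbf a^\dagger$ component and (ii) that the residual linear Hamiltonian vanish, and then carry the trigonometry through to \eqref{eq:findmu}--\eqref{eq:findka}. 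Your added remark that $\epsilon=O(\theta)$ justifies replacing $\tan\epsilon$ by $\epsilon$ (and $\cos^2\epsilon$ by $1$ in $\kappa$) is a useful point the paper leaves implicit.
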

%% PROOF
\begin{proof}
We define a new density operator
\begin{align}
\tilde{\rho}_c (t)= \mathbf{S}^\dagger(\zeta)\mathbf{D}^\dagger(\alpha)\rho_c(t) \mathbf{D}(\alpha)\mathbf{S}(\zeta) \;.
\label{eq:Rhotran}
\end{align}
Using the properties 
\begin{align*}
\mathbf{D}^\dagger(\alpha)\mathbf{a}\mathbf{D}(\alpha) &= \mathbf{a} + \alpha,\nonumber\\
\mathbf{S}^\dagger(re^{i\phi_r})\mathbf{a}\mathbf{S}(re^{i\phi_r}) &= \mathbf{a}\cosh r - e^{i\phi_r}\mathbf{a}^\dagger \sinh r \nonumber
\end{align*}
and a few straightforward computations, with the parameter tuning provided in the statement, we obtain
\begin{align}
\tfrac{d}{d\tau}\tilde{\rho}_c\left(\tau\right) = \kappa\; \mathcal{D}(\mathbf{a}) \tilde{\rho}_c(\tau).
\label{eq:LindMvac}
\end{align}
This equation stabilizes $\tilde{\rho}_c$ towards the vacuum state at a rate $\kappa$. The converse change of variables yields the result.

We further note that the suggested tuning for $\epsilon$ involves
$$ \sqrt{\frac{\left(1+\sin(2u)\right)\left(1-\sin(2u)\right)}{1+\cos(\phi_r - 2\chi)\sin (2u)}} \in (0,\sqrt{2})$$
i.e.~a bounded value for any bounded $\alpha$, squeezing rate $r$ and directions $\phi_\alpha,\phi_r$. The other parameters remain bounded and the fractions are never undefined provided $r$ remains finite i.e.~$|u| < \pi/4$. The convergence rate $\kappa$ is strictly positive and it approaches zero as $r$ approaches $\pm \infty$ i.e.~$|u|$ approaches $\pi/4$.
\end{proof}

Theorem \ref{thm:con} shows how any squeezed state can be stabilized by \eqref{eq:4thm1}. It also covers the convergence properties of this equation for all nontrivial cases. Indeed, when $\pi/4 <|u|<\pi/2$ for the qubit pairs, the cavity state will just drift away to unbounded energies, which is not of much interest in quantum technology. For $u = \pm \pi/4$, the operator $\mathbf{L}_1$ becomes proportional to a Hermitian measurement operator of  $\mathbf{X}_{\phi_r/2}$, i.e.~one quadrature of an electromagnetic field mode or a particular linear combination of position/momentum in a mechanical oscillation mode. Any eigenstates of this Hermitian operator are invariant states of the system. These are infinitely squeezed, but are stabilized infinitely slowly.

The convergence rate $\kappa$ in adimensional time $\tau$, expresses the evolution of $\rho_c$ as a function of the number of 2-qubit interactions. If those interactions are separated by zero delay, then it corresponds to units of $2t_r = \tfrac{4\theta}{\Omega}$, where $\Omega$ is the Rabi oscillation frequency describing the effective qubit-oscillator coupling strength. 

The result also emphasizes the role of entanglement, beyond simply classical correlations among the qubits, in order to stabilize a squeezed state. For example, for $\epsilon=0$, classical correlations would correspond to a superposition of states with various values of $\mu$. These would correspond to trying to stabilize squeezed states with various orientations $\phi_r$, whose superposition yields a thermally diffused steady state, with higher uncertainties in all quadratures than a coherent state. We have checked that this does indeed occur in corresponding simulations of our reservoir.

%%  Simulations
%%%%%%%%%%%%%%%%%%

\subsection{Imperfection effects and Simulation results}

We have carried out simulations on a system, once under ideal conditions, and once when it contains typical imperfections.

For the simulations of an idealized experiment, we propagate the cavity state, which is initially prepared at the vacuum state $|0\rangle$, according to \eqref{eq:rhodyns},\eqref{eq:M2qubits}. The qubit-oscillator coupling is set to $\theta=\pi/20$. We have represented $\rho_c$ in the Fock basis, truncated at a high number of photons. Figure \ref{bigfig:ideal2} illustrates the tuning capabilities of the reservoir with pairwise entangled qubits by varying their parameters. We provide the steady state values of $\langle \mathbf{X}_\phi \rangle_{\rho_c}$ and $(\Delta\mathbf{X}_\phi)_{\rho_c}$ expected from Theorem 1, as well as those observed in simulations. In particular for $\Delta\mathbf{X}_\phi$ we give the extreme values, corresponding in theory to $\phi = \phi_r/2$ and $\phi=\phi_r/2+\pi/2$. The squeezing power in dB, as usually expressed in the literature \cite{VMC08}, corresponds to $r_{\text{eff}} \text{ (dB)} = w \, \min_{\phi} \ln\left(2\Delta\mathbf{X}_{\phi}\right) \, \text{dB}$ with $w = \frac{-20}{\ln\left(10\right)} \simeq -8.686$. All the simulation results shown feature a good agreement with the approximate theory. In particular, the state always converges very close to a minimum uncertainty state $(\Delta\mathbf{X}_{\phi_r/2})_{\rho_c} (\Delta\mathbf{X}_{\phi_r/2+\pi/2})_{\rho_c} = 0.250 \pm 0.0005$. For $\epsilon=0$ this is a consequence of the small value of $\theta$. But also for $\epsilon>0$ the approximation, dropping $\mathbf{L}_2$ and $\mathbf{L}_3$ from \eqref{eq:LindM}, appears to hold reasonably well, although a slight decrease in squeezing power is observed in simulation. 
A squeezing factor of 10 dB, as presented in the figure (top right), is still considered a significant benchmark \cite{VMC08} while the world record, by the group of Roman Schnabel \cite{VMD16}, is around 17 dB.

\begin{figure*}
\centering
\setlength{\unitlength}{1mm}
\begin{picture}(190,185)(10,0)

% 1,1 %

\put(30,155){\includegraphics[scale=.4, trim=45mm 8mm 10mm 8mm, clip=true]{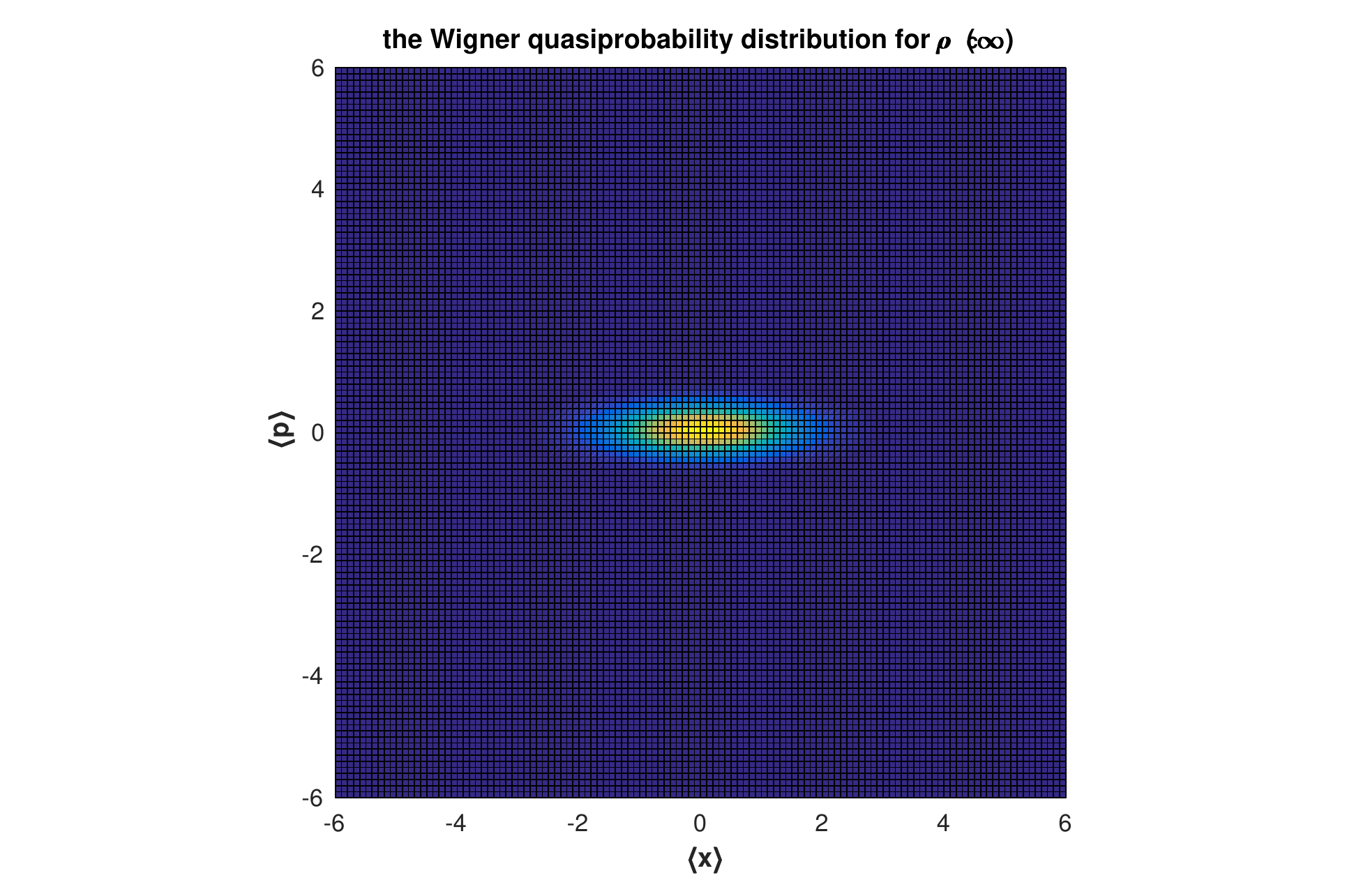}}
\put(0,177){
\tiny
\begin{tabular}{|c|}
\hline
$\q{\psi_{q^2}}$ setting \\ \hline $u=\dfrac{\pi}{6}\vphantom{\dfrac{A^K}{A_K}}$ \\ \hline $\mu=\pi$ \\ \hline $\epsilon=0$ \\ \hline $\chi=0$ \\
\hline
\end{tabular}}
\put(0,144){
\tiny
\begin{tabular}{|c|c|c|c|c|c|c|c|}
\hline 
 & $\left\langle \mathbf{X}_0 \right\rangle $ & $\left\langle \mathbf{X}_{\frac{\pi}{2}}\right\rangle $ & $\phi_r$ & $\Delta\mathbf{X}_{\frac{\phi_r}{2}}$ & $\Delta\mathbf{X}_{\frac{\phi_r+\pi}{2}}$ & $\kappa$ & $r_{\text{eff}}$(dB)\\
\hline 
\hline 
the. & 0 & 0 & $0$ & 0.966 & 0.259 & 0.025 & 5.7 \\
\hline 
sim. & 0 & 0 & $0$ & 0.970 & 0.258 & %{\footnotesize NA} 
& 5.75 \\
\hline 
\end{tabular}}
\put(77,156){\scriptsize Re$\alpha$}
\put(24,196){\scriptsize Im$\alpha$}

% 1,2 %

\put(122,155){\includegraphics[scale=.4, trim=45mm 8mm 10mm 8mm, clip=true]{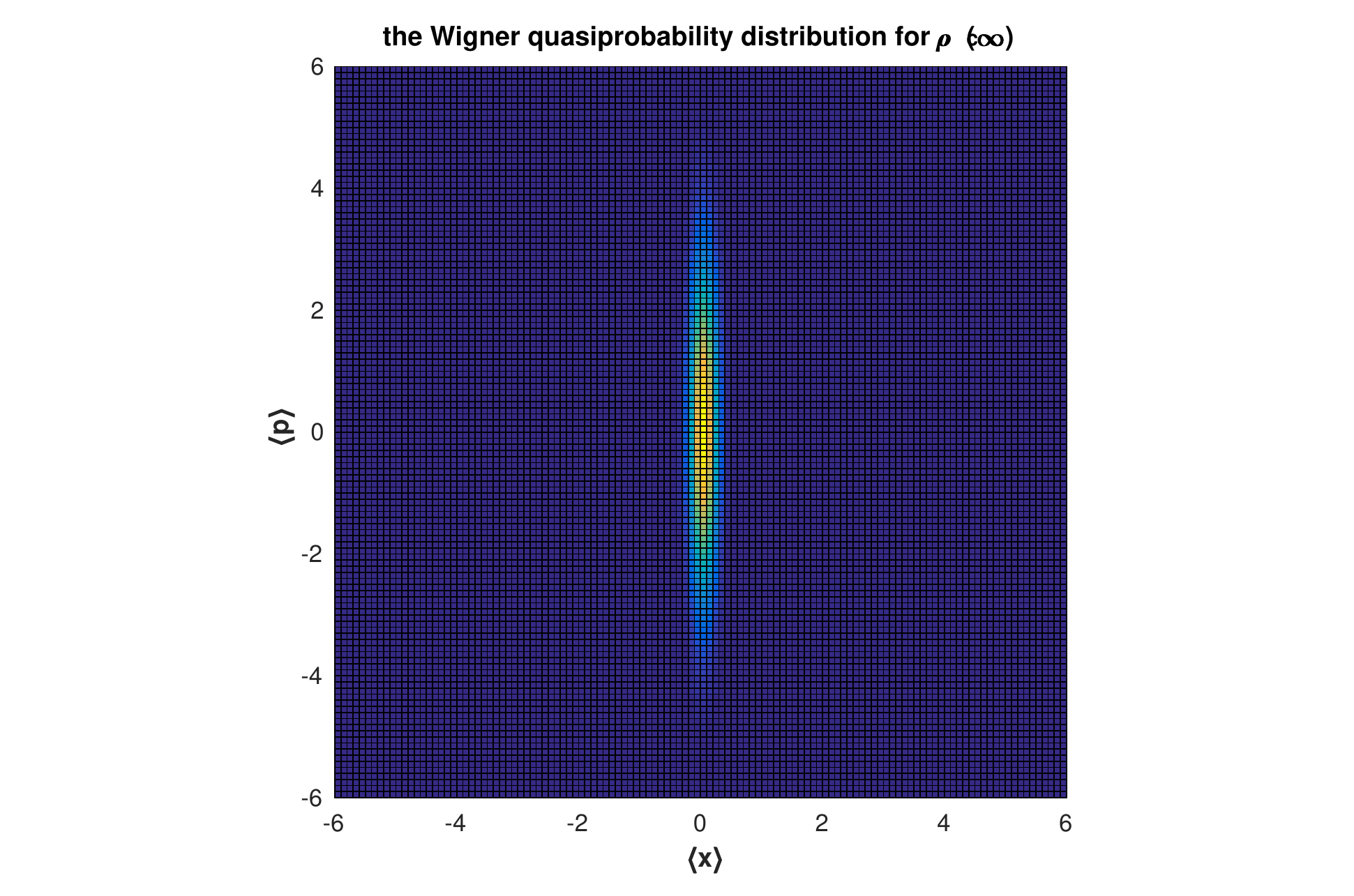}}
\put(92,177){
\tiny
\begin{tabular}{|c|}
\hline
$\q{\psi_{q^2}}$ setting \\ \hline $u=\dfrac{\pi}{4.5}\vphantom{\dfrac{A^K}{A_K}}$ \\ \hline $\mu=0$ \\ \hline $\epsilon=0$ \\ \hline $\chi=0$ \\
\hline
\end{tabular}}
\put(92,144){
\tiny
\begin{tabular}{|c|c|c|c|c|c|c|c|}
\hline 
 & $\left\langle \mathbf{X}_0 \right\rangle $ & $\left\langle \mathbf{X}_{\frac{\pi}{2}}\right\rangle $ & $\phi_r$ & $\Delta\mathbf{X}_{\frac{\phi_r}{2}}$ & $\Delta\mathbf{X}_{\frac{\phi_r+\pi}{2}}$ & $\kappa$ & $r_{\text{eff}}$(dB)\\
\hline 
\hline 
the. & 0 & 0 & $\pi$ & 1.690 & 0.148 & 0.009 & 10.6 \\  %(0.0086)
\hline 
sim. & 0 & 0 & $\pi$ & 1.711 & 0.146 & %{\footnotesize NA} 
& 10.7 \\
\hline 
\end{tabular}}
\put(169,156){\scriptsize Re$\alpha$}
\put(116,196){\scriptsize Im$\alpha$}

% 2,2 %

\put(122,88){\includegraphics[scale=.4, trim=45mm 8mm 10mm 8mm, clip=true]{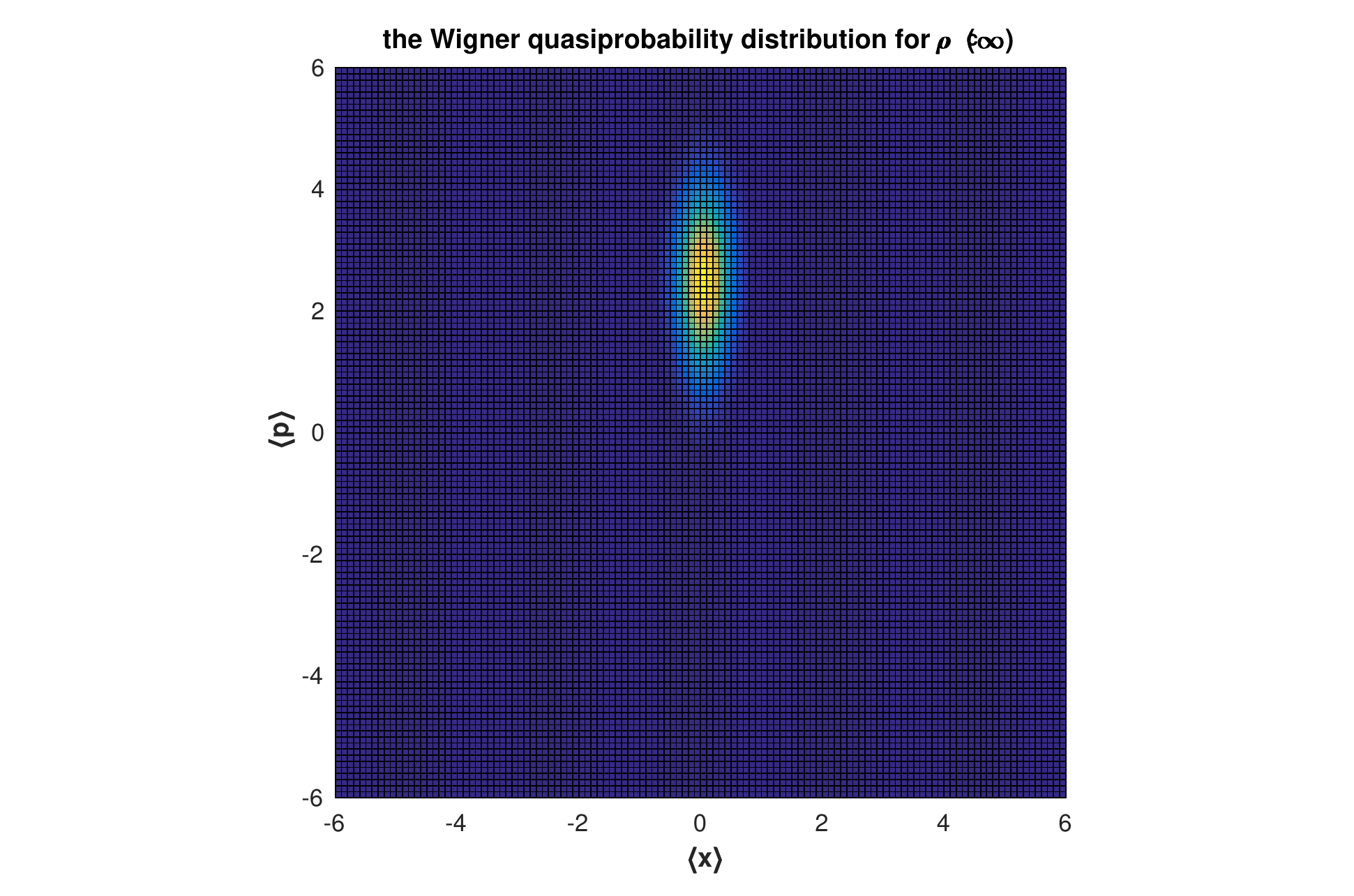}}
\put(92,110){
\tiny
\begin{tabular}{|c|}
\hline
$\q{\psi_{q^2}}$ setting \\ \hline $u=\dfrac{\pi}{6}\vphantom{\dfrac{A^K}{A_K}}$ \\ \hline $\mu=0$ \\ \hline $\epsilon=\dfrac{\pi}{30}\vphantom{\dfrac{A^K}{A_K}}$ \\ \hline $\chi=0$ \\
\hline
\end{tabular}}
\put(92,77){
\tiny
\begin{tabular}{|c|c|c|c|c|c|c|c|}
\hline 
 & $\left\langle \mathbf{X}_0 \right\rangle $ & $\left\langle \mathbf{X}_{\frac{\pi}{2}}\right\rangle $ & $\phi_r$ & $\Delta\mathbf{X}_{\frac{\phi_r}{2}}$ & $\Delta\mathbf{X}_{\frac{\phi_r+\pi}{2}}$ & $\kappa$ & $r_{\text{eff}}$(dB)\\
\hline 
\hline 
the. & 0 & 2.576 & $\pi$ & 0.966 & 0.259 & 0.025 & 5.7 \\
\hline 
sim. & 0 & 2.454 & $\pi$ & 0.921 & 0.272 & %{\footnotesize NA} 
& 5.3 \\
\hline 
\end{tabular}}
\put(169,89){\scriptsize Re$\alpha$}
\put(116,129){\scriptsize Im$\alpha$}

% 2,1 %

\put(30,88){\includegraphics[scale=.4, trim=45mm 8mm 10mm 8mm, clip=true]{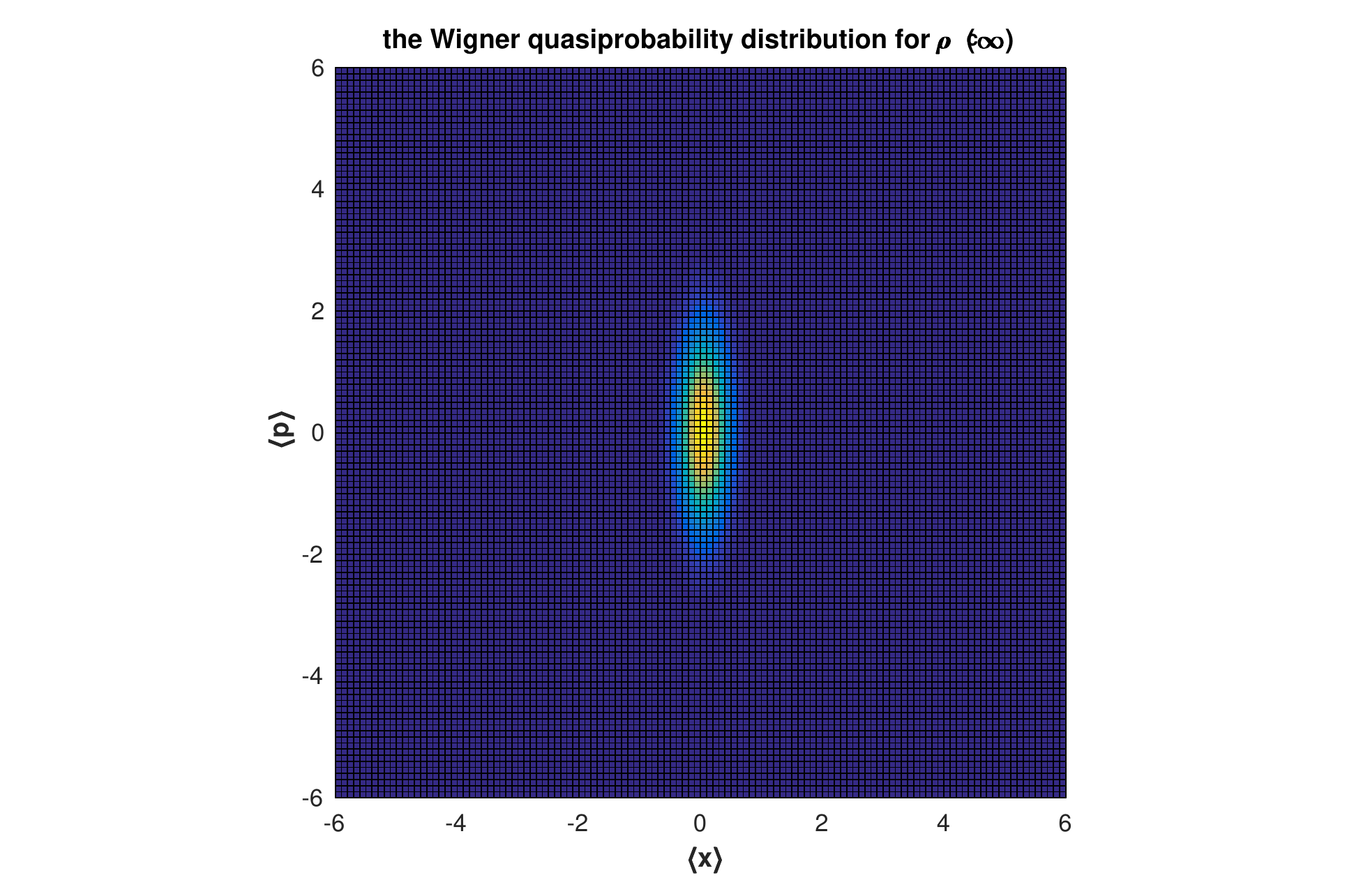}}
\put(0,110){
\tiny
\begin{tabular}{|c|}
\hline
$\q{\psi_{q^2}}$ setting \\ \hline $u=\dfrac{\pi}{6}\vphantom{\dfrac{A^K}{A_K}}$ \\ \hline $\mu=0$ \\ \hline $\epsilon=0$ \\ \hline $\chi=0$ \\
\hline
\end{tabular}}
\put(0,77){
\tiny
\begin{tabular}{|c|c|c|c|c|c|c|c|}
\hline 
 & $\left\langle \mathbf{X}_0 \right\rangle $ & $\left\langle \mathbf{X}_{\frac{\pi}{2}}\right\rangle $ & $\phi_r$ & $\Delta\mathbf{X}_{\frac{\phi_r}{2}}$ & $\Delta\mathbf{X}_{\frac{\phi_r+\pi}{2}}$ & $\kappa$ & $r_{\text{eff}}$(dB)\\
\hline 
\hline 
the. & 0 & 0 & $\pi$ & 0.966 & 0.259 & 0.025 & 5.7 \\
\hline 
sim. & 0 & 0 & $\pi$ & 0.970 & 0.258 & %{\footnotesize NA} 
& 5.75 \\
\hline 
\end{tabular}}
\put(77,89){\scriptsize Re$\alpha$}
\put(24,129){\scriptsize Im$\alpha$}

% 3,1 %
\put(30,19){\includegraphics[scale=.4, trim=45mm 8mm 10mm 8mm, clip=true]{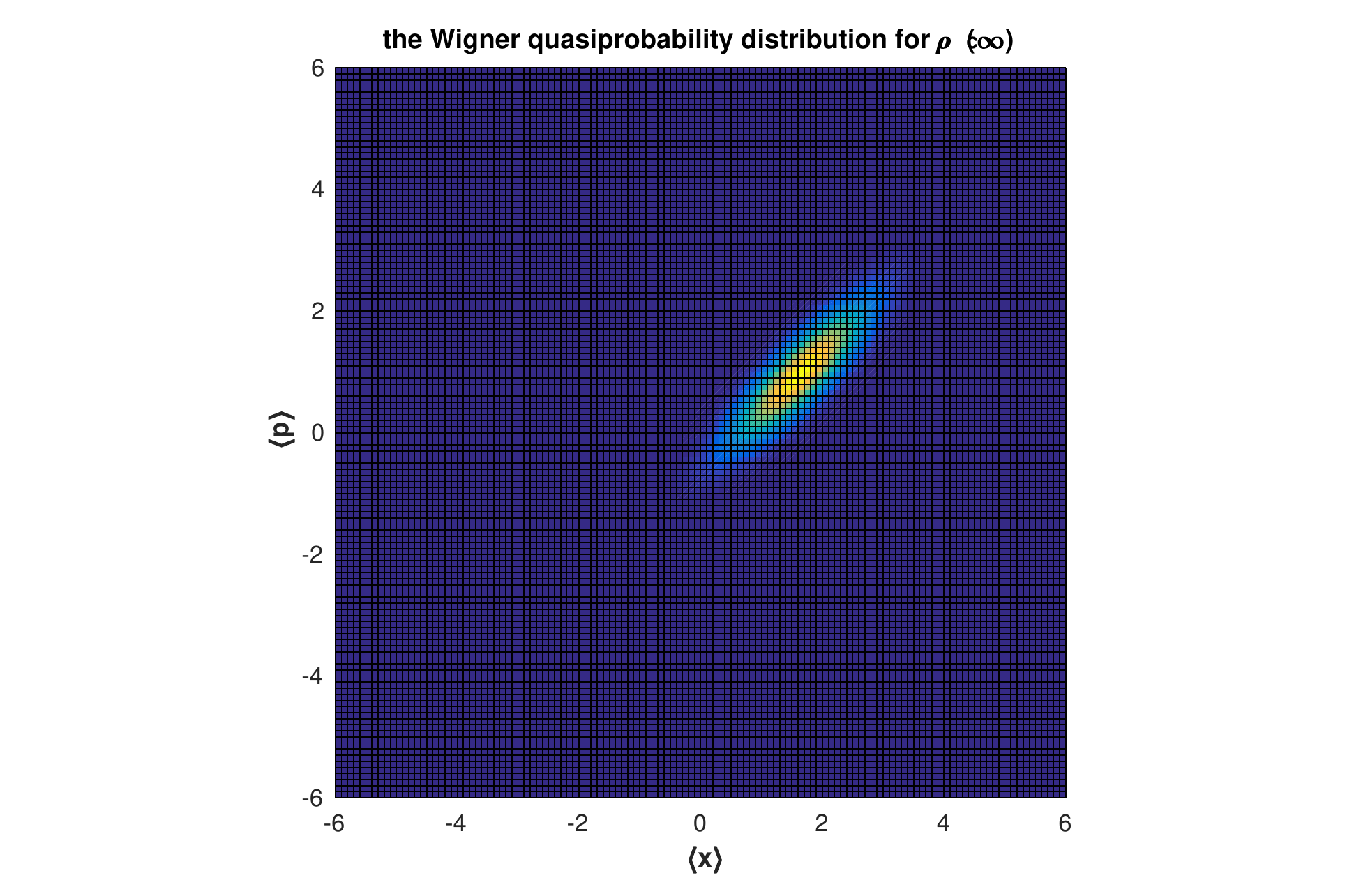}}
\put(0,43){
\tiny
\begin{tabular}{|c|}
\hline
$\q{\psi_{q^2}}$ setting \\ \hline $u=\dfrac{\pi}{6}\vphantom{\dfrac{A^K}{A_K}}$ \\ \hline $\mu=\pi/2$ \\ \hline $\epsilon=\pi/30$ \\ \hline $\chi=\pi/2$ \\
\hline
\end{tabular}}
\put(0,8){
\tiny
\begin{tabular}{|c|c|c|c|c|c|c|c|}
\hline 
 & $\left\langle \mathbf{X}_0 \right\rangle $ & $\left\langle \mathbf{X}_{\frac{\pi}{2}}\right\rangle $ & $\phi_r$ & $\Delta\mathbf{X}_{\frac{\phi_r}{2}}$ & $\Delta\mathbf{X}_{\frac{\phi_r+\pi}{2}}$ & $\kappa$ & $r_{\text{eff}}$(dB)\\
\hline 
\hline 
the. & 1.633 & 0.943 & $\frac{\pi}{2}$ & 0.966 & 0.259 & 0.025 & 5.7 \\
\hline 
sim. & 1.589 & 0.921 & $\frac{\pi}{2}$ & 0.946 & 0.265 & %{\footnotesize NA} 
& 5.5 \\
\hline 
\end{tabular}}
\put(77,20){\scriptsize Re$\alpha$}
\put(24,60){\scriptsize Im$\alpha$}

% 3,2 %
\put(122,19){\includegraphics[scale=.4, trim=45mm 8mm 10mm 8mm, clip=true]{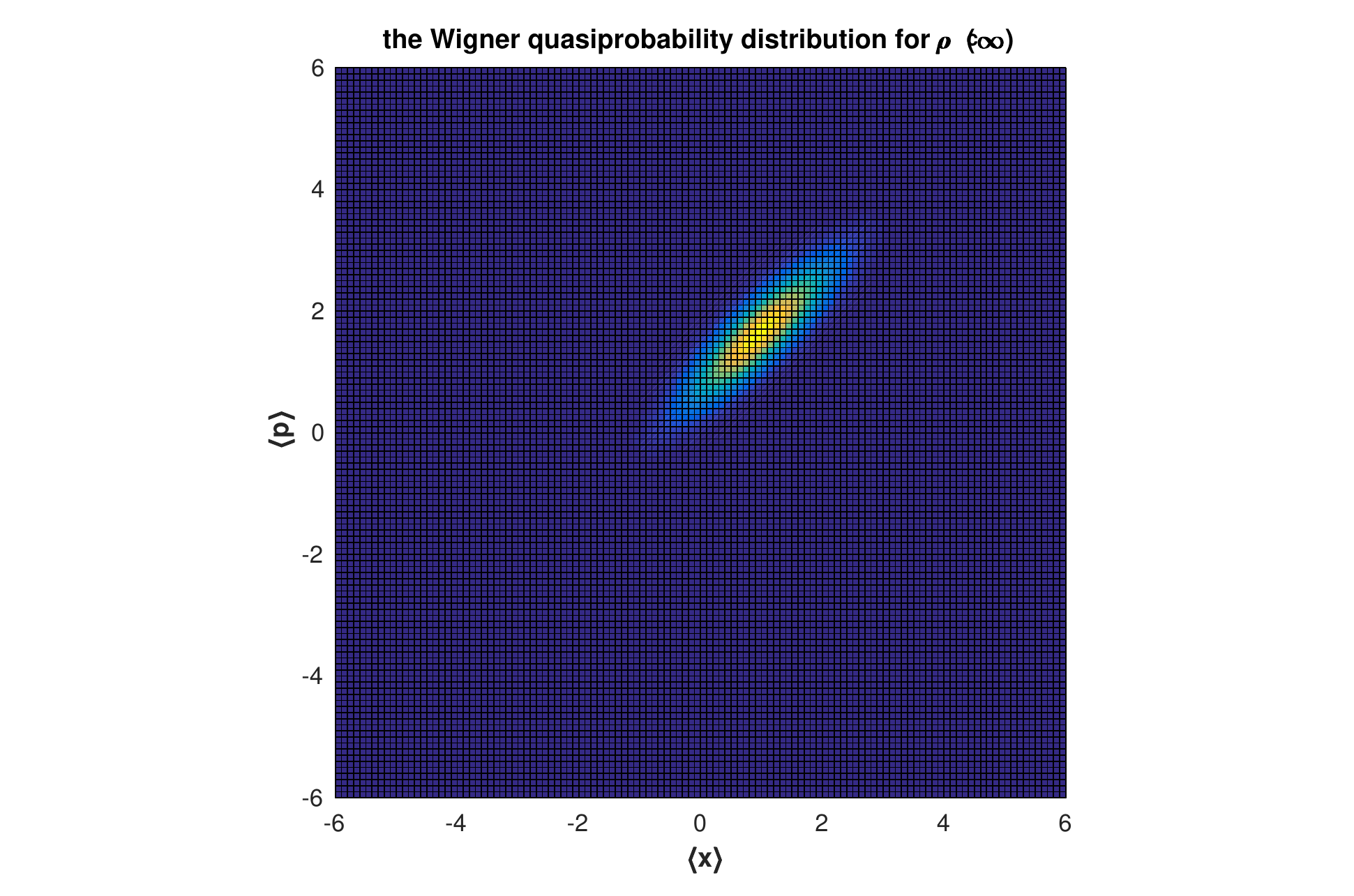}}
\put(92,43){
\tiny
\begin{tabular}{|c|}
\hline
$\q{\psi_{q^2}}$ setting \\ \hline $u=\dfrac{\pi}{6}\vphantom{\dfrac{A^K}{A_K}}$ \\ \hline $\mu=\pi/2$ \\ \hline $\epsilon=\pi/30$ \\ \hline $\chi=0$ \\
\hline
\end{tabular}}
\put(92,8){
\tiny
\begin{tabular}{|c|c|c|c|c|c|c|c|}
\hline 
 & $\left\langle \mathbf{X}_0 \right\rangle $ & $\left\langle \mathbf{X}_{\frac{\pi}{2}}\right\rangle $ & $\phi_r$ & $\Delta\mathbf{X}_{\frac{\phi_r}{2}}$ & $\Delta\mathbf{X}_{\frac{\phi_r+\pi}{2}}$ & $\kappa$ & $r_{\text{eff}}$(dB)\\
\hline 
\hline 
the. & 0.943 & 1.633 & $\frac{\pi}{2}$ & 0.966 & 0.259 & 0.025 & 5.7 \\
\hline 
sim. & 0.921 & 1.589 & $\frac{\pi}{2}$ & 0.946 & 0.265 & %{\footnotesize NA} 
& 5.5 \\
\hline 
\end{tabular}}
\put(169,20){\scriptsize Re$\alpha$}
\put(116,60){\scriptsize Im$\alpha$}

\end{picture}
\caption{(Colour online) Wigner quasi-probability distribution for the cavity steady state $\rho_c(\infty)$, whose dynamics is governed by the Kraus map \eqref{eq:rhodyns},\eqref{eq:M2qubits}, for $\theta=\pi/20$ and various tuning of the input qubits' parameters in \eqref{eq:qubitis}. The tables provide the tuning values of $\q{\psi_{q^2}}$ as well as characteristics of $\rho_c$ according to Thm.1 and to simulations. Thanks to $\theta \ll 1$, all presented cases feature a good agreement with the theory. Note that by taking $u>0$, we have a negative value for $r$ so the minimum uncertainty is obtained on $\Delta \mathbf{X}_{\frac{\phi_r+\pi}{2}}$. In this figure, the. and sim. are short for theoretical approximations and simulations results respectively.}\label{bigfig:ideal2}
\end{figure*}

% Simu with losses
We next add two types of realistic losses to the setting.

First, we include energy decay of the harmonic oscillator in a zero temperature bath, via the annihilation operator. Over a small time $t_r$ compared to the oscillator characteristic lifetime $1/\gamma$, the corresponding effect can be modeled by the Kraus map:
$\rho_c \mapsto \mathbf{M}_0 \rho_c \mathbf{M}_0^\dagger + \mathbf{M}_1 \rho_c \mathbf{M}_1^\dagger$ with
\begin{align}
\mathbf{M}_0 &= \mathbf{I} - \frac{\gamma t_r}{2}\mathbf{a}^\dagger\mathbf{a}\, ,\nonumber\\
\mathbf{M}_1 &= \sqrt{\gamma t_r}\mathbf{a} \, .
\label{eq:onethement}
\end{align}
In the simulations, we alternate this Kraus map with the qubit interaction Kraus map \eqref{eq:rhodyns},\eqref{eq:M2qubits}.
The decay vs.~squeezing tradeoff is governed by the ratio $\Omega/\gamma$, which we take as $1000 \pi$, considered as reasonable numbers in the literature (e.g.~$\Omega=2\pi 10$ kHz and $1/\gamma = 50$ ms in a cavity QED setup \cite{SLBRR12,SRBR11}). According to Theorem 1, this tradeoff will limit the practical values of $u$, as a more squeezed steady state implies a slower squeezing rate to fight the decay. It further suggests taking larger $t_r$, since decay per interaction will increase in the same order as $t_r$, whereas convergence towards the squeezed state is in $t_r^2$ per interaction; this is only valid with the approximation of small $\theta$, but in simulations, values up to at least $\theta=\pi/4$ appear to work without trouble.

Second, we consider the possible loss of entanglement between the qubits before they interact with the oscillator. Such a loss can be due to imperfect preparation, occasional loss of a qubit before its interaction with the oscillator (see e.g.~in \cite{HR06B,DDC08,DAM93,BSM96,SDZ11}) or qubit decoherence, although we expect the latter to be quite negligible on the timescales of a single interaction. We will model this by assuming that instead of having input qubit pairs in a pure state e.g.~$\q{\psi_{q^2}} = \beta_{gg} \q{gg} + \beta_{ee} \q{ee}$ to stabilize the squeezed vacuum, we will start with a mixed state 
\begin{align*}
\rho_{q^2} = |\beta_{gg}|^2 \q{gg}\qd{gg} + |\beta_{ee}|^2 \q{ee}\qd{ee} + \eta\, (\beta_{gg}^*\beta_{ee} \q{ee}\qd{gg} +  \beta_{ee}^*\beta_{gg} \q{gg}\qd{ee})
\end{align*}
where $\eta<1$. There are various ways to decompose this mathematical expression into physically intuitive effects. One possibility is to say, with probability $1-\eta$, we just have a classical correlation, which in the Lindblad approximation (small $\theta$) corresponds to a thermal bath with $(n_{\text{thermal}})/(1+n_{\text{thermal}}) = \tan^2(u)$, i.e.~infinite temperature as $u$ approaches the optimally squeezing value $\pi/4$. Another viewpoint is that with probability $p_f = (1-\eta)/2$ we have a phase flip, in which case the entangled pair would stabilize a squeezed state turned by $\pi/2$, i.e.~in the wrong direction. The corresponding Lindblad approximation would thus just be
\begin{align*}
\tfrac{d}{d\tau}\rho_c = \kappa\; \left((1\text{-}p_f)\mathcal{D}(\mathbf{S}(r)\mathbf{a}\mathbf{S}^\dagger(r)) + p_f \mathcal{D}(\mathbf{S}(\text{-}r)\mathbf{a}\mathbf{S}^\dagger(\text{-}r))\right) \, \rho_c.
\end{align*}
This illustrates the fact that, in the absence of oscillator decay, $\theta$ has no effect on the associated steady state, which can be computed by writing the dynamics of $\langle \mathbf{X}_0^2 \rangle$ and $\langle \mathbf{X}_{\frac{\pi}{2}}^2 \rangle$ in the Heisenberg picture associated to this Lindblad equation. This takes the closed form:
\begin{eqnarray}\label{eq:imp}
\tfrac{d}{dt} \langle \mathbf{X}_0^2 \rangle &=& 2 \theta^2 \Big( -\cos2u \langle \mathbf{X}_0^2 \rangle + \tfrac{1}{4}(1+\eta\sin 2u) \Big), \\
\nonumber \tfrac{d}{dt} \langle \mathbf{X}_{\frac{\pi}{2}}^2 \rangle &=& 2 \theta^2 \Big( -\cos2u \langle \mathbf{X}_{\frac{\pi}{2}}^2 \rangle + \tfrac{1}{4}(1-\eta\sin 2u) \Big) .
\end{eqnarray}
The optimal tuning according to this equation is $\sin 2u= \eta$, which yields an optimal
$$\Delta \mathbf{X}_{\frac{\pi}{2}} = \sqrt{\langle \mathbf{X}_{\frac{\pi}{2}}^2\rangle} = \, \frac{1}{2}\, (1-\eta^2)^{1/4} \; . $$
%associated to $\Delta \mathbf{X}_0 \Delta \mathbf{X}_{\frac{\pi}{2}} = \frac{1}{4}\sqrt{1+\eta^2}$. 
Note that \eqref{eq:imp} also expresses the absence of benefit with just classical qubit correlations, i.e.~with $\rho_{q^2} = |\beta_{gg}|^2 \q{gg}\qd{gg} + |\beta_{ee}|^2 \q{ee}\qd{ee}$ and thus $\eta=1-2p_f=0$: the uncertainties are $1/(2\cos2u)$ and thus take their minimum with ground qubits and vacuum oscillator. (We do not add the oscillator decay to this Lindblad study, even though it would be easy to do so, because it would not really help us find the best squeezing: the main limitation on cranking up $\theta$, which helps counter oscillator decay, is to invalidate the Lindblad approximation.)

With this setting, qubit decoherence prior to interaction appears to be the most constraining imperfection. For instance, according to the formula, a value of $\eta=99.5\%$ would be necessary in order to obtain the $10$dB benchmark of squeezing \cite{VMC08} on the vacuum state. Simulations, illustrated in Figure \ref{fig:optim}, confirm that we then reach, at best, $9.4$dB, with $u=\pi/4.3$ and $\theta = \pi/28$. The oscillator decay has a negligible effect and the system is not very sensitive to $\theta$, except for the most extreme values of $u$. This suggests that implementing entanglement purification strategies like progressive stabilization or heralding \cite{MBD16}, even at the cost of reducing the number of qubit pairs per second, would be beneficial with this type of setting.

%\begin{figure}[!htp]
%\centering
%\includegraphics[scale=.7]{REngJSp}
%\caption{(Colour online) Framework of quantum reservoir engineering. The aim is to stabilize the system S at a target state by coupling it to another quantum system R, referred to as the reservoir, which is viewed as a stream of input quantum states that are discarded after interaction. The specificity of the present work, unlike previous work where the reservoir is considered to consist of independent subsystems, is to establish the effect of entanglement among the reservoir items.}
%\label{fig:GREng}
%\end{figure}

\begin{figure}[!htp]
\centering
\includegraphics[width=95mm, trim=10mm 5mm 0mm 5mm, clip=true]{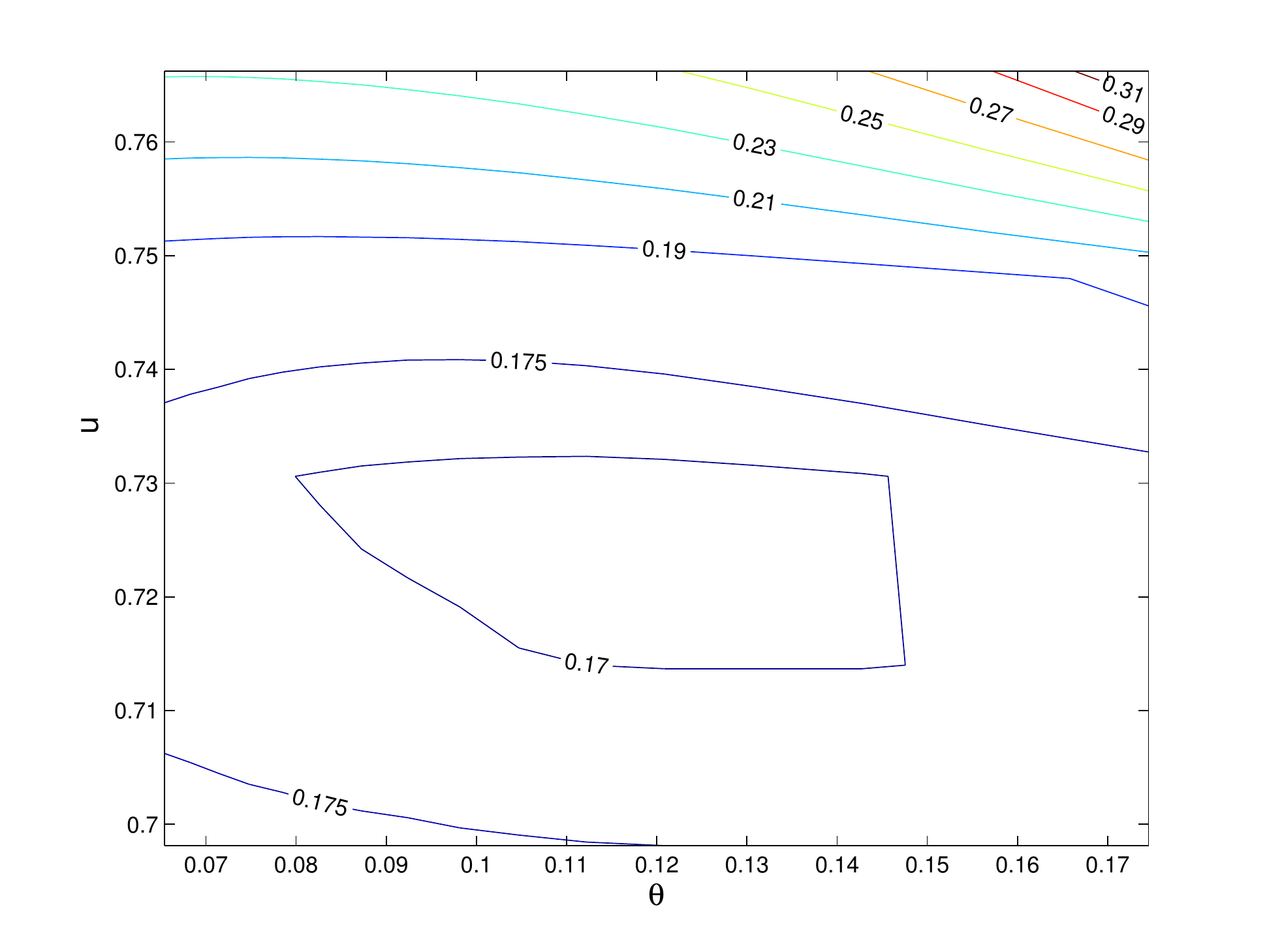}
\caption{(color online) Squeezing of the steady state, described by $\Delta \mathbf{X}_{\frac{\pi}{2}}$, for various tunings of our entangled-qubit-pair reservoir in presence of imperfections (see main text). We have focused on the region of $\theta$ small and $u$ close to $\pi/4$, where squeezing is maximal. Any value below $0.5$ implies squeezing, and the optimum is found to be $\Delta \mathbf{X}_{\frac{\pi}{2}} \simeq 0.169$ corresponding to $9.4$dB, with $u=\pi/4.3$ and $\theta = \pi/28$.}\label{fig:optim}
\end{figure}

%%%%%%%%%%%%%%%%
%%  p QUBITS  %%  --  mostly NEW
%%%%%%%%%%%%%%%%

\section{Resonant interaction with many entangled qubits}\label{sec:interp}

We now proceed with our more theoretically motivated analysis of the effect of time-entanglement in a stream of reservoir qubits. The observations from the previous section raise two questions.
\newline (i) Can we ``win'' with tailored inputs involving some optimally entangled set of $p>2$ qubits, e.g.~stabilize a more squeezed steady state than with $p=2$?
\newline (ii) Can we consider a physical setup where entanglement is continuously created by the stream of qubits passing through a time-invariant device, before they interact with the oscillator? 

From the simulations with imperfections described in the previous section, the first question suggests that the corresponding experiments would be difficult to carry out in practice. Besides the necessity to generalize a CNOT entangling device towards some more complicated circuit, the apparent danger of destroying all squeezing benefits when a little coherence is lost does not bode well for such tailored joint states of the qubits. We therefore leave the first question for future work and concentrate on the second one, which appears experimentally clearer and naturally involves only local entanglement.

%%  Entangler
%%%%%%%%%%%%%%%%

\subsection{A continuously entangled stream}

This more operational viewpoint on time-entangled reservoirs, keeping a continuous-time limit in mind, would work as depicted in Figure \ref{fig:entangler}. We will assume throughout this section that the qubits enter the setup in their ground state $\q{g}$. We place upstream of the oscillator an ``entangler'' device/gate that enacts a joint unitary $\mathbf{U}_E$ onto a pair of qubits, e.g.~
\begin{equation}\label{eq:entangler}
\mathbf{U}_E = \text{exp}\left(-i \phi \sigma_x \otimes \sigma_y \right) = \cos\phi \mathbf{I} - i \sin\phi \sigma_x \otimes \sigma_y \, .
\end{equation}
By applying $\mathbf{U}_E$ once every \emph{two time steps} while the qubits pass sequentially through this device, we would obtain a stream of pairwise entangled qubits, exactly as used in Section \ref{sec:pairs}; indeed, $\mathbf{U}_E\,\q{gg} = \beta_{gg} \q{gg} + \beta_{ee} \q{ee}$ with $\beta_{gg},\beta_{ee}$ real. Here instead, we will operate the device once at \emph{every discrete time step}, such that it acts twice on each qubit $t$: once to entangle it with qubit $t-1$ and once to entangle it with $t+1$. The resulting continuously entangled stream of qubits is then sent to the oscillator, to undergo as previously e.g.~the resonant interaction \eqref{eq:ResInt} and then be dropped.

This operational viewpoint allows for a scalable analysis of the reservoir setup. Since operations on disjoint Hilbert spaces commute, the precise physical position of the entangler (and qubit dropping) --- just before the oscillator, or with several travelling qubits between entangler and oscillator --- has no effect on the resulting reservoir.

\begin{figure}[!htp]
\centering
\includegraphics[scale=.6]{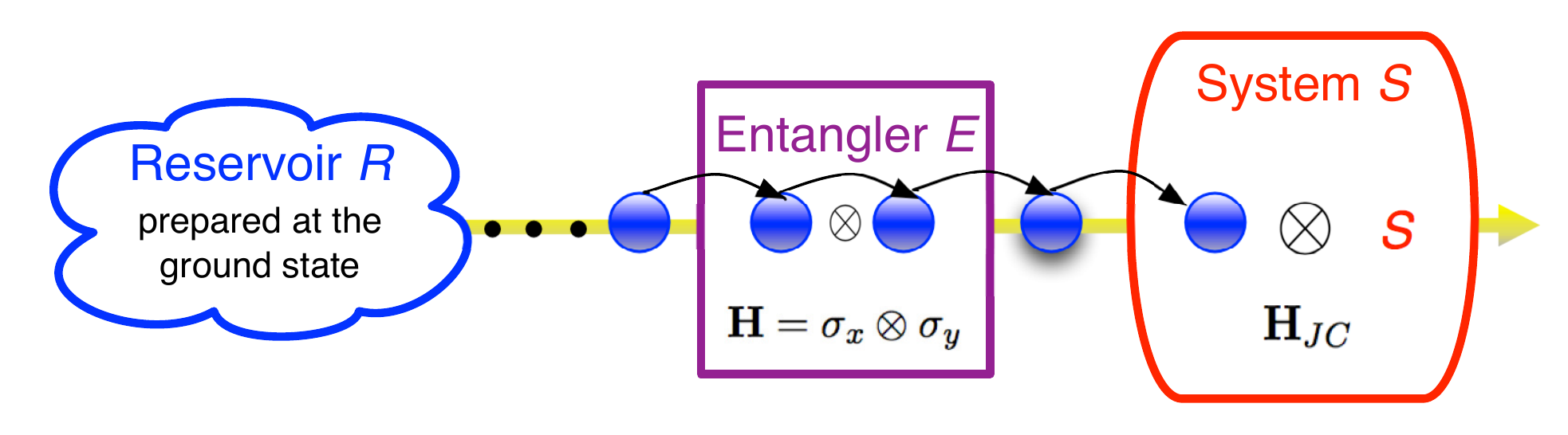}
\caption{(Colour online) Scheme of the reservoir with recursively entangled qubits. The stream of reservoir qubits, initially in the ground state, is sent through an ``entangler'' where each qubit is successively entangled with the qubit directly before and directly after it in time. The operation of the entangler is characterized by the propagator $\mathbf{U}_{E}$ as per \eqref{eq:entangler}, while resonant interaction between the oscillator and each qubit is given by the propagator $\mathbf{U}_{r}$ as per \eqref{eq:ResInt}.}
\label{fig:entangler}
\end{figure}

When placing the entangler far upstream, we can examine the resulting joint qubits state, which is in a Matrix Product State form \cite{RO14}. This form is in some sense ``locally'' entangled in time, as the order of the qubits specifically matters (except for trivial values of $\phi$). For instance, the state of five qubits initialized in $\q{ggggg}$ after passing through this device, becomes
\begin{eqnarray*}
\beta_1 \q{ggggg} &+& \beta_2 (\q{gggee}+\q{ggeeg}+\q{geegg}+\q{eeggg}) \\
&+& \beta_3 (\q{egegg} + \q{gegeg} + \q{ggege})\\
&& + \beta_3 (\q{eeeeg}+\q{geeee}) + ...  
\end{eqnarray*}
with $\beta_1 > \beta_2 > \beta_3 > ...$ for small $\phi$. We here see a stronger weight on states with two consecutive excitations than with two non-consecutive ones, and so on.

This viewpoint may suggest a hard non-Markovian analysis for the oscillator state's evolution. Yet in fact, keeping a memory of one qubit in joint state with the oscillator, is sufficient in order to recover a Markovian process. Indeed, to analyze the reservoir effect, both analytically and in simulations, it is more efficient to place the entangler (and qubit dropping) just next to the oscillator. In fact, qubit $t+1$ does not have to be modeled before its entanglement with qubit $t$, which itself can happen just before $t$ interacts with the oscillator; at that stage, qubit $t-1$ can already have interacted and been dropped. Note that the oscillator can also be entangled with qubit $t$ before interacting with it, because it has interacted with qubit $t-1$ that was entangled with $t$. Therefore, denoting by $\rho_J(t)$ the joint state of qubit $t$ and the oscillator just \emph{before} their interaction, one time step can be described by:
\begin{eqnarray}\label{eq:rhoJdyns}
\rho_J(t) & \mapsto & \q{g}\qd{g} \otimes \rho_J(t) =: \rho'_A(t) \\ \nonumber
& \mapsto & (\mathbf{U}_E \otimes \mathbf{I}_c)\, \rho'_A(t)\, (\mathbf{U}_E \otimes \mathbf{I}_c)^\dagger =: \rho'_B(t)\\ \nonumber
& \mapsto & (\mathbf{I}_q \otimes \mathbf{U}_r)\, \rho'_B(t)\, (\mathbf{I}_q \otimes \mathbf{U}_r)^\dagger =: \rho'_C(t)\\ \nonumber
& \mapsto & \text{Trace}_{q(t)}(\rho'_C(t)) = \rho_J(t+1) \; .
\end{eqnarray}
Here we have adopted the order convention (qubit $t+1$) $\otimes$ (qubit $t$) $\otimes$ (oscillator); $\mathbf{I}_c$ and $\mathbf{I}_q$ denote identity operators on oscillator and qubit states, respectively; and $\text{Trace}_{q(t)}(\cdot)$ is the partial trace with respect to qubit $t$. Replacing $\mathbf{U}_E,\; \mathbf{U}_R$ by their expressions and factoring out $\q{g}_t$ and $\q{e}_t$ to take the partial trace, as in \eqref{eq:factorit2}, we obtain the Kraus map:
\begin{eqnarray*}
\rho_J(t+1) & = & \mathbf{M}_g \rho_J(t) \mathbf{M}_g^\dagger + \mathbf{M}_e \rho_J(t) \mathbf{M}_e^\dagger \; , \;\; \text{with}\\[2mm]
\mathbf{M}_g & = & (\cos\phi \q{g}_{t+1} \qd{g}_t + \sin\phi \q{e}_{t+1} \qd{e}_t) \cos\theta_{\mathbf{N}} \\
& & +(\cos\phi \q{g}_{t+1} \qd{e}_t + \sin\phi \q{e}_{t+1} \qd{g}_t) \tfrac{\sin\theta_{\mathbf{N}}}{\sqrt{\mathbf{N}}} \mathbf{a}^\dagger,  \\[2mm]
\mathbf{M}_e & = & (\cos\phi \q{g}_{t+1} \qd{e}_t + \sin\phi \q{e}_{t+1} \qd{g}_t) \cos\theta_{\mathbf{N+I}} \\
& & -(\cos\phi \q{g}_{t+1} \qd{g}_t + \sin\phi \q{e}_{t+1} \qd{e}_t) \, \mathbf{a} \, \tfrac{\sin\theta_{\mathbf{N}}}{\sqrt{\mathbf{N}}} \, .
\end{eqnarray*}
This dynamics expresses the transition from a joint state of oscillator and qubit $t$, to the joint state of oscillator and qubit $t+1$. It is however perfectly licit to consider the map on the joint state of ``oscillator and active qubit'', i.e.~identify the Hilbert spaces of qubit $t$ before and $t+1$ after this time step, and just drop the indices on the qubits.

%%  Analysis: stream
%%%%%%%%%%%%%%%%%%%%%%

\subsection{Analysis of cavity behavior}

%C1: (+ to g) model and intuitive way of what happens
A first intuition about the system behavior can be obtained by regrouping terms:
\begin{eqnarray}\label{eq:ghmmodel}
\mathbf{M}_g & = & (\cos\phi \q{g} \text{+} \sin\phi \q{e}) \qd{+} \, \frac{1}{\sqrt{2}} \left(\cos\theta_{\mathbf{N}} \text{+} 
\tfrac{\sin\theta_{\mathbf{N}}}{\sqrt{\mathbf{N}}} \mathbf{a}^\dagger \right) \\
\nonumber && + (\cos\phi \q{g} \text{-} \sin\phi \q{e}) \qd{-} \, \frac{1}{\sqrt{2}} \left(\cos\theta_{\mathbf{N}} \text{-} \tfrac{\sin\theta_{\mathbf{N}}}{\sqrt{\mathbf{N}}} \mathbf{a}^\dagger \right),\\[2mm]
\nonumber \mathbf{M}_e & = & (\cos\phi \q{g} \text{+} \sin\phi \q{e}) \qd{+} \, \frac{1}{\sqrt{2}}\left(\cos\theta_{\mathbf{N+I}} \text{-} \mathbf{a} \, \tfrac{\sin\theta_{\mathbf{N}}}{\sqrt{\mathbf{N}}} \right) \\
\nonumber && + (\cos\phi \q{g} \text{-} \sin\phi \q{e}) \qd{+} \, \frac{1}{\sqrt{2}}\left(\text{-}\cos\theta_{\mathbf{N+I}} \text{-} \mathbf{a} \, \tfrac{\sin\theta_{\mathbf{N}}}{\sqrt{\mathbf{N}}} \right) \, .
\end{eqnarray}
Here we have introduced the orthonormal basis $\q{\pm} = (\q{g}\pm \q{e})/\sqrt{2}$ for the qubit. This highlights that the operator acting on the part of the state conditioned on a $\q{+}$ qubit state (resp.~a $\q{-}$ qubit state) corresponds to the result of a resonant interaction with a qubit initially in $\q{+}$ (resp.~$\q{-}$), that is roughly a displacement along the positive direction of $\mathbf{X}$ (resp.~$\mathbf{-X}$). However, for $\phi \neq \pi/4$, the resulting cavity states are not kept independent but interfere, as the qubit goes from $\q{+}$ and $\q{-}$ respectively to $(\cos\phi \q{g} \text{+} \sin\phi \q{e})$ and $(\cos\phi \q{g} \text{-} \sin\phi \q{e})$ which can have a significant overlap. Part of the oscillator state that has moved in the $\mathbf{-X}$ direction would thus interfer with the $\mathbf{+X}$-moved state, and moreover in the next iteration, part of it moves further in the positive $\mathbf{X}$ direction while part goes back along the $\mathbf{-X}$ direction. This suggests a state that is elongated along the $\mathbf{X}$ axis, with the interference possibly leading to a squeezing effect. 

%C2: full +/- model and symmetries
A quantitative analysis of the squeezing effect is easier by writing the model fully in the $\q{+},\q{-}$ basis for the qubit: 
\begin{eqnarray*}
\rho_J  =  \rho_{+,+}  \q{+}\q{+} + \rho_{-,-}  \q{-}\q{-}+ \rho_{+,-}  \q{+}\q{-} + \rho_{+,-}^\dagger  \q{-}\q{+} \; .
\end{eqnarray*}
Indeed, in \eqref{eq:ghmmodel} we observe that swapping $\q{+}$ for $\q{-}$ and at the same time turning the cavity state by $\pi$ in its phase space (in other words applying operator $e^{i\pi \mathbf{N}}$), yields an invariance of the dynamics. From there we get that
\begin{eqnarray}\label{eq:sysms}
\rho_{+,+} = e^{i\pi \mathbf{N}}\, \rho_{-,-}\, e^{i\pi \mathbf{N}} &=:& \rho_D/2\;, \\ \nonumber
\rho_{+,-}\, e^{i\pi \mathbf{N}} = e^{i\pi \mathbf{N}} \, \rho_{+,-}^\dagger &=:& \rho_O/2 \; ,
\end{eqnarray}
with both $\rho_D$ and $\rho_O$ Hermitian, and $\rho_D$ positive semi-definite of trace $1$. From this we can reduce the dynamics to the evolution of $\rho_{D}$ and $\rho_O$:
\begin{eqnarray}
\label{eq:1rhoD}
\rho_D(t+1) & = & \tfrac{1\text{+}\sin2\phi}{2} \Phi(\rho_D(t)) \\
\nonumber && + \tfrac{1\text{-}\sin2\phi}{2} e^{i\pi \mathbf{N}}\Phi(\rho_D(t))e^{i\pi \mathbf{N}}\\
\nonumber && + \tfrac{\cos2\phi}{2} \left(e^{i\pi \mathbf{N}} \, \Upsilon(\rho_O(t)) +  \Upsilon(\rho_O(t)) \,  e^{i\pi \mathbf{N}} \right),\\
\label{eq:2rhoO}
\rho_O(t+1) & = & \tfrac{1\text{+}\sin2\phi}{2} \Upsilon(\rho_O(t)) \\
\nonumber && + \tfrac{1\text{-}\sin2\phi}{2} e^{i\pi \mathbf{N}}\Upsilon(\rho_O(t))e^{i\pi \mathbf{N}}\\
\nonumber && + \tfrac{\cos2\phi}{2} \left(e^{i\pi \mathbf{N}} \, \Phi(\rho_D(t)) +  \Phi(\rho_D(t)) \,  e^{i\pi \mathbf{N}} \right)
\end{eqnarray}
where we have defined two superoperators. The first one, 
\begin{eqnarray*}
\Phi(\rho) & = & \frac{1}{2}(\cos\theta_{\mathbf{N}} + \tfrac{\sin\theta_{\mathbf{N}}}{\sqrt{\mathbf{N}}} \mathbf{a}^\dagger) \rho (\cos\theta_{\mathbf{N}} + \tfrac{\sin\theta_{\mathbf{N}}}{\sqrt{\mathbf{N}}} \mathbf{a}^\dagger)^\dagger \\
&& + \frac{1}{2}(\cos\theta_{\mathbf{N+1}} - \mathbf{a} \tfrac{\sin\theta_{\mathbf{N}}}{\sqrt{\mathbf{N}}}) \rho (\cos\theta_{\mathbf{N+1}} - \mathbf{a} \tfrac{\sin\theta_{\mathbf{N}}}{\sqrt{\mathbf{N}}})^\dagger \; ,
\end{eqnarray*}
corresponds to a trace-preserving completely positive map, equivalent to the effect of resonant interaction with a single qubit initialized in the $\q{+}$ state. The second one,
\begin{eqnarray*}
\Upsilon(\rho) & = & \frac{1}{2}(\cos\theta_{\mathbf{N}} + \tfrac{\sin\theta_{\mathbf{N}}}{\sqrt{\mathbf{N}}} \mathbf{a}^\dagger) \rho (\cos\theta_{\mathbf{N}} + \tfrac{\sin\theta_{\mathbf{N}}}{\sqrt{\mathbf{N}}} \mathbf{a}^\dagger)^\dagger \\
&& - \frac{1}{2}(\cos\theta_{\mathbf{N+1}} - \mathbf{a} \tfrac{\sin\theta_{\mathbf{N}}}{\sqrt{\mathbf{N}}}) \rho (\cos\theta_{\mathbf{N+1}} - \mathbf{a} \tfrac{\sin\theta_{\mathbf{N}}}{\sqrt{\mathbf{N}}})^\dagger \; ,
\end{eqnarray*}
is not trace-preserving and basically expresses the interference discussed around equation \eqref{eq:ghmmodel}.

%C3: results
Our aim then is to compute the steady-state values of the cavity state for small values of $\theta$, i.e.~very short interaction time between the cavity and each qubit in the stream. In appendix, we explain how this equation can be analyzed to obtain the following results at steady state for $\rho_D$:
\begin{itemize}
\item $\langle \mathbf{X}_{\pi/2} \rangle = 0 \; ,$   
\item $\langle \mathbf{X}_{0} \rangle = \dfrac{\theta \sin 2\phi}{2\,(1-\sin 2\phi)} + O(\theta^3) \; ,$
\item $\Delta \mathbf{X}_{\pi/2} = \sqrt{\dfrac{1-2\sin2\phi\cos^2 2\phi}{4 \cos^2 2\phi}} + O(\theta)\; .$
\end{itemize}
Using \eqref{eq:sysms}, this implies the same steady state values of $\langle \mathbf{X}_{\pi/2} \rangle$ and $\Delta \mathbf{X}_{\pi/2}$ for the oscillator mode steady state, resulting from tracing out the qubit in $\rho_J$. Since $\rho_{+,+}$ and $\rho_{-,-}$ will have opposite values of $\langle \mathbf{X}_{0} \rangle$, this indicates some mixing due to the entanglement with the qubit, which, however, vanishes as $\theta$ becomes small. The first-order squeezing expressed in $\Delta \mathbf{X}_{\pi/2}$ does not depend on $\theta$, but it depends heavily on $\phi$, as depicted in Figure \ref{fig:DelpiFromPhi}. For $\phi=0$ we get zero squeezing; for $\pi/4$ the uncertainty becomes infinite as basically the state drifts off and our analysis is invalidated; the situation is not symmetric with the sign of $\phi$; and a minimum uncertainty is obtained at $\phi \simeq 0.2763$, yielding $\Delta \mathbf{X}_{\pi/2} \simeq 0.2874$ which corresponds to about 5dB of squeezing.

This indicates that entanglement in a stream of qubits does stabilize a related squeezed state, although bounded by some constant value. Furthermore, in the extreme case where both interaction times ($\theta$ and $\phi$) go to zero, the entanglement has no effect (at least on the observables mentioned above). We have run some simulations with the full exact model to indeed confirm these conclusions. The Wigner function of the oscillator state takes the shape of a squeezed state, not saturating the Heisenberg uncertainty at finite $\theta$, with dissymetry in $\phi$, and with the values of $\langle \mathbf{X}_{\pi/2} \rangle$, $\langle \mathbf{X}_{0} \rangle$, $\Delta \mathbf{X}_{\pi/2}$ closely matching the ones computed above.

\begin{figure}
\centering
\includegraphics[width=0.9\columnwidth, trim=15mm 0mm 15mm 5mm, clip=true]{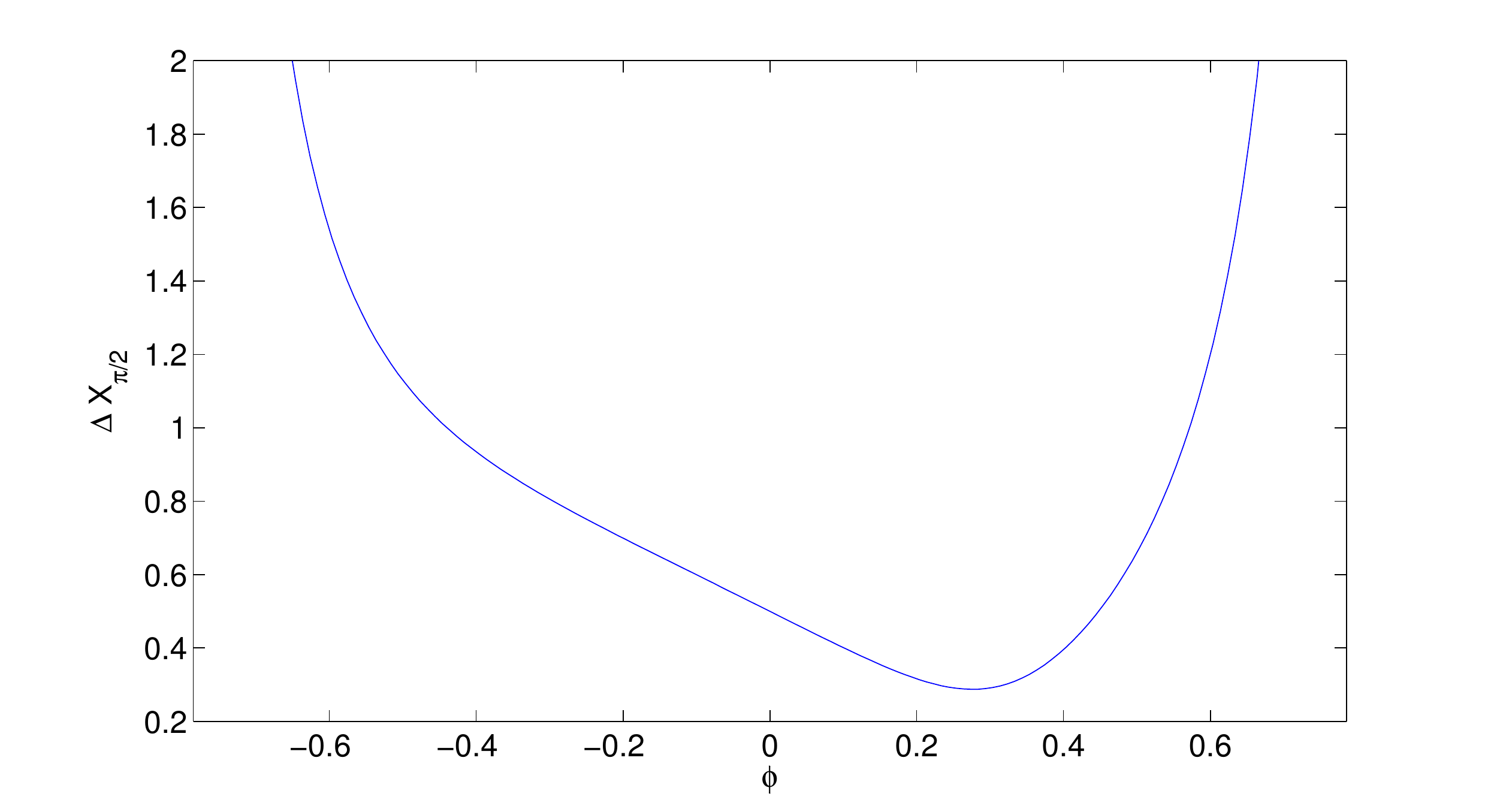}
\caption{Steady-state value of $\Delta \mathbf{X}_{\pi/2}$ for the oscillator state (weakly entangled qubit traced out) as a function of entangler parameter $\phi$, for small values of qubit-oscillator interaction $\theta$.}\label{fig:DelpiFromPhi}
\end{figure}

%%%%%%%%%%%%%%%%%%
%%   CONCLUDE   %%  --  NEW
%%%%%%%%%%%%%%%%%%

\section{Conclusion}

The main motivation for this paper is a theoretical analysis of some possible consequences of time-entanglement in reservoir inputs. We have shown that a quantum effect appears, by proving how the entanglement among a stream of qubits, consecutively interacting with a resonant harmonic oscillator, stabilizes a strongly squeezed state whose squeezed quadrature orientation depends on the entanglement phase. By taking a very weakly interacting reservoir ($\theta$ going to zero), the stream of entangled qubits can be pushed to the continuous-time limit. 

In terms of experimental realizations, one can of course imagine various variants to the cavityQED setup of \cite{DDC08} considered here for pedagogical reasons. For instance, a straightforward circuitQED architecture would consist of a high-Q cavity coupled to a pair of transmon qubits 1 and 2. After entangling those qubits and switching on consecutively their respective interactions with the cavity, they can be fast reset e.g.~with an auxiliary low-Q cavity as described in \cite{GLP13}. The same physical qubits can then act as qubits 3 and 4 of the scheme described in Section III. This is also sufficient to implement the proposal of Section IV, which likewise has only 2 ``active'' qubits at any time. The first qubit can be reset directly after its interaction with the high-Q cavity as it has no further use, and is now playing the role of effective qubit 3. Then it undergoes the entanglement operation with the ``still active'' qubit 2 before qubit 2 interacts with the high-Q cavity, and so on. We can also mention that, although we know of no such existing experimental setup, in principle nothing requires the field to be trapped and the qubits moving: one could as well imagine a field travelling through a lattice of qubits. The latter would then have to be set up in a state that stabilizes the desired state for the propagating field, e.g.~entangled lattice in order to generate squeezing. In a converse way, from our observations, one could view the amount of squeezing in such setup as a witness of entanglement in a stream or lattice of qubits.

The observations in this paper motivate further work on the effects of entanglement in the inputs to quantum systems. From a principles viewpoint, one may want to prove a ``quantum-improved'' scaling, e.g.~that a particular state of $m$ entangled qubits leads to better performance than $m/2$ pairwise entangled ones. In preliminary calculations, we have confirmed how the obvious GHZ state is not a good candidate and rather decreases the steady-state squeezing, since before the last qubit has entered, the setup has essentially been a thermal bath (undefined entanglement phase). Some optimization procedure might however still lead to an improved situation w.r.t.~pairwise entanglement. From a more practical viewpoint, this observation calls for a more systematic and careful investigation of the possible role played by time entanglement in (typically continuous-time) inputs to quantum systems. The discrete-time method developed in the present paper might suggest a tractable approach in this sense. In particular, we see that the level of entanglement ($\phi$) among consecutive qubits has to be maintained bounded away from zero to keep a squeezing effect in the small-$\theta$ limit, possibly indicating more general conclusions about a meaningful time quantization of such entangled streams.

\section*{Acknowledgments}
The authors would like to give special thanks to P. Rouchon for drawing our attention to this open question. The authors would also like to thank M. Mirrahimi, Z. Leghtas and B. Huard for stimulating discussions.

%%%%%%%%%%%%%%%%
%%  APPENDIX  %%  -- NEW
%%%%%%%%%%%%%%%%

%C2.3: method described in appendix
\section*{Appendix}

The analysis of \eqref{eq:1rhoD},\eqref{eq:2rhoO} for small $\theta$ exploits an iterative procedure. The key point is to note that $\Phi(\cdot)$ gives a result of order $1$, while $\Upsilon(\cdot)$ gives a result of order $\theta$. We further use a Taylor series expansion for $\cos\theta_{\mathbf{N}}$ and $\sin\theta_{\mathbf{N}}$, and obvious relations about the commutation of operators with $e^{i\pi{\mathbf{N}}}$. We then just write the steady-state conditions and obtain information from them.

$\bullet$ We start with \eqref{eq:1rhoD} at 0 order in $\theta$, where $\Phi$ is almost identity. The trick is that $\rho_O$, even if unknown, anyway appears only at $O(\theta)$, so we obtain:
$$e^{i\pi{\mathbf{N}}}\, \rho_D\, e^{i\pi{\mathbf{N}}} = \rho_D + O(\theta)\, .$$
Now plugging this into \eqref{eq:2rhoO}, we get
$$\rho_O = \cos2\phi\, \rho_D + O(\theta)\, .$$

$\bullet$ Now that we know $\rho_O$ up to order $\theta$, this would yield only an $O(\theta^2)$ uncertainty in \eqref{eq:1rhoD} and we can refine to obtain:
$$e^{i\pi{\mathbf{N}}}\, \rho_D\, e^{i\pi{\mathbf{N}}} = \rho_D + \frac{\theta \sin2\phi}{1-\sin2\phi}\, [(\mathbf{a}-\mathbf{a}^\dagger),\, \rho_D]\; +  O(\theta^2)\, .$$
The right-hand side corresponds to a first-order Taylor expansion of the displacement operator $\mathbf{D}\left(-\alpha\right)$ acting on $\rho_D$, with $\alpha = \frac{-\theta \sin2\phi}{1-\sin2\phi} \, .$ With this we can already compute $\langle \mathbf{X}_{\pi/2} \rangle$ and $\langle \mathbf{X}_{0} \rangle$ to first order. This confirms the $0$ on $\langle \mathbf{X}_{\pi/2} \rangle$, which should be exact stemming from symmetry arguments, and gives the value provided for $\langle \mathbf{X}_{0} \rangle$.
\newline Plugging this into \eqref{eq:2rhoO} gives an expression for $\rho_O$ up to order $O(\theta^2)$ terms; we don't provide it here as $\rho_O$ is not our main object of interest.

$\bullet$ Iterating once more, we plug the $\rho_O$ with $O(\theta^2)$ uncertainty into \eqref{eq:1rhoD} in order to compute $\rho_D$ including terms of order $\theta^2$. We also use the previously obtained ``simple properties'' for $\rho_D$ whenever it is multiplied by operators and factors of order $\theta$ or higher, like e.g.~$\theta^2 \mathbf{a} \rho_D \mathbf{a}^\dagger = \theta^2 \mathbf{a} ( e^{i\pi{\mathbf{N}}} \, \rho_D \, e^{i\pi{\mathbf{N}}}) \mathbf{a}^\dagger \, + O(\theta^3)$. This yields the equation:
\begin{eqnarray*}
\rho_D & = & e^{i\pi{\mathbf{N}}}\, \rho_D\, e^{i\pi{\mathbf{N}}} + \frac{\theta \sin 2 \phi}{1-\sin2\phi}[\rho_D,\,(\mathbf{a}-\mathbf{a}^\dagger)] \\
& & - 2 \theta^2 \cos^2 2\phi \sin 2\phi \mathcal{D}(\tfrac{\mathbf{a}+\mathbf{a}^\dagger}{2}) \rho_D \\
& & + \frac{\theta^2}{2}\, \mathcal{D}(\mathbf{a}) \rho_D + \frac{\theta^2}{2}\, \mathcal{D}(\mathbf{a}^\dagger) \rho_D \\
& & + \frac{\theta^2}{2} \cos^2 2\phi \left( \rho_D + \mathbf{a}\rho_D \mathbf{a}^\dagger - \mathbf{a}^\dagger\rho_D \mathbf{a} \right) \, .
\end{eqnarray*}
With this expression we compute the values of $\langle \mathbf{X}_{0} \rangle$ (confirming no contribution at order $\theta^2$), of $\langle \mathbf{X}_{\pi/2} \rangle$ as a check (it should be no contribution at any order), and finally of $\,\langle (\mathbf{X}_{\pi/2})^2 \rangle\,$ to get the result we're interested in. In particular, we use that $e^{i\pi{\mathbf{N}}} \mathbf{X}^2 e^{i\pi{\mathbf{N}}} = \mathbf{X}^2$ to treat the first term on the right hand side.

%\bibliographystyle{plain}
%\bibliography{ZAJsp}

\begin{thebibliography}{10}

\bibitem{AAA13}
J.~Aasi, J.~Abadie, B.~P. Abbott, and et~al.
\newblock Enhanced sensitivity of the {LIGO} gravitational wave detector by
  using squeezed states of light.
\newblock {\em Nature Photonics}, 7(8):613--619, 2013.

\bibitem{ATT14}
V.~A. Averchenko, V.~Thiel, and N.~Treps.
\newblock Nonlinear photon subtraction from a multimode quantum field.
\newblock {\em Phys. Rev. A}, 89:063808, 2014.

\bibitem{BW08}
R.~Blatt and D.~Wineland.
\newblock Entangled states of trapped atomic ions.
\newblock {\em Nature}, 453(7198):1008--1015, 2008.

\bibitem{BSM96}
M.~Brune, F.~Schmidt-Kaler, A.~Maali, J.~Dreyer, E.~Hagley, J.~M. Raimond, and
  S.~Haroche.
\newblock Quantum rabi oscillation: A direct test of field quantization in a
  cavity.
\newblock {\em Phys. Rev. Lett.}, 76:1800--1803, 1996.

\bibitem{CMMRD01}
A.~R.~R. Carvalho, P.~Milman, R.~L. de~Matos~Filho, and L.~Davidovich.
\newblock Decoherence, pointer engineering, and quantum state protection.
\newblock {\em Phys. Rev. Lett.}, 86(22):4988--4991, 2001.

\bibitem{CCM81}
C.~M. Caves.
\newblock Quantum-mechanical noise in an interferometer.
\newblock {\em Phys. Rev. D}, 23:1693--1708, 1981.

\bibitem{CP03}
S.~G. Clark and A.~S. Parkins.
\newblock Entanglement and entropy engineering of atomic two-qubit states.
\newblock {\em Phys. Rev. Lett.}, 90:047905, 2003.

\bibitem{DAM93}
L.~Davidovich, A.~Maali, M.~Brune, J.~M. Raimond, and S.~Haroche.
\newblock Quantum switches and nonlocal microwave fields.
\newblock {\em Phys. Rev. Lett.}, 71:2360--2363, 1993.

\bibitem{DDC08}
S.~Deleglise, I.~Dotsenko, C.~Sayrin, J.~Bernu, M.~Brune, J.-M. Raimond, and
  S.~Haroche.
\newblock Reconstruction of non-classical cavity field states with snapshots of
  their decoherence.
\newblock {\em Nature}, 455(7212):510--514, 2008.

\bibitem{DBB15}
N.~Didier, J.~Bourassa, and A.~Blais.
\newblock Fast quantum nondemolition readout by parametric modulation of
  longitudinal qubit-oscillator interaction.
\newblock {\em Phys. Rev. Lett.}, 115(20):203601, 2015.

\bibitem{DQB14}
N.~Didier, F.~Qassemi, and A.~Blais.
\newblock Perfect squeezing by damping modulation in circuit quantum
  electrodynamics.
\newblock {\em Phys. Rev. A}, 89:013820, 2014.

\bibitem{LAJP00}
L.-M. Duan, A.~S\o{}rensen, J.~I. Cirac, and P.~Zoller.
\newblock Squeezing and entanglement of atomic beams.
\newblock {\em Phys. Rev. Lett.}, 85:3991--3994, 2000.

\bibitem{GLP13}
K.~Geerlings, Z.~Leghtas, I.~M. Pop, S.~Shankar, L.~Frunzio, R.~J. Schoelkopf,
  M.~Mirrahimi, and M.~H. Devoret.
\newblock Demonstrating a driven reset protocol for a superconducting qubit.
\newblock {\em Phys. Rev. Lett.}, 110(12):120501, 2013.

\bibitem{GHK17}
R.~Gohm, F.~Haag, and B.~K\"ummerer.
\newblock Universal preparability of states and asymptotic completeness.
\newblock {\em Communications in Mathematical Physics}, 352(1):59--94, 2017.

\bibitem{GDD13}
H.~Grote, K.~Danzmann, K.~L. Dooley, R.~Schnabel, J.~Slutsky, and H.~Vahlbruch.
\newblock First long-term application of squeezed states of light in a
  gravitational-wave observatory.
\newblock {\em Phys. Rev. Lett.}, 110:181101, 2013.

\bibitem{HR06B}
S.~Haroche and J.-M. Raimond.
\newblock {\em Exploring the Quantum: atoms, cavities and photons}.
\newblock Oxford University Press, 2006.

\bibitem{HGSBBL14}
M.~Hosseini, G.~Guccione, H.~J. Slatyer, B.~C. Buchler, and P.~K. Lam.
\newblock Multimode laser cooling and ultra-high sensitivity force sensing with
  nanowires.
\newblock {\em Nature Communications}, 5:4663, 2014.

\bibitem{MS90}
M.~T. Jaekel and S.~Reynaud.
\newblock Quantum limits in interferometric measurements.
\newblock {\em Europhysics Letters}, 13(4):301--306, 1990.

\bibitem{KMC13}
A.~Kronwald, F.~Marquardt, and A.~A. Clerk.
\newblock Arbitrarily large steady-state bosonic squeezing via dissipation.
\newblock {\em Phys. Rev. A}, 88:063833, 2013.

\bibitem{LTP15}
Z.~Leghtas, S.~Touzard, I.~M. Pop, A.~Kou, B.~Vlastakis, A.~Petrenko, K.~M.
  Sliwa, A.~Narla, S.~Shankar, M.~J. Hatridge, M.~Reagor, L.~Frunzio, R.~J.
  Schoelkopf, M.~Mirrahimi, and M.~H. Devoret.
\newblock Confining the state of light to a quantum manifold by engineered
  two-photon loss.
\newblock {\em Science}, 347(6224):853--857, 2015.

\bibitem{MBD16}
T.~Meany, D.~N. Biggerstaff, M.~A. Broome, A.~Fedrizzi, M.~Delanty, M.~J.
  Steel, A.~Gilchrist, G.~D. Marshall, A.~G. White, and M.~J. Withford.
\newblock Engineering integrated photonics for heralded quantum gates.
\newblock {\em Scientific Reports}, 6:25126, 2016.

\bibitem{MM15J}
U.~C. Mendes and C.~Mora.
\newblock Cavity squeezing by a quantum conductor.
\newblock {\em New Journal of Physics}, 17(11):113014, 2015.

\bibitem{MM07}
N.~C. Menicucci and G.~J. Milburn.
\newblock Single trapped ion as a time-dependent harmonic oscillator.
\newblock {\em Phys. Rev. A}, 76(5):052105, 2007.

\bibitem{RO14}
R.~Orus.
\newblock A practical introduction to tensor networks: Matrix product states
  and projected entangled pair states.
\newblock {\em Annals of Physics}, 349:117--158, 2014.

\bibitem{SGDL07}
S.~Pielawa, G.~Morigi, D.~Vitali, and L.~Davidovich.
\newblock Generation of {Einstein-Podolsky-Rosen-Entangled} radiation through
  an atomic reservoir.
\newblock {\em Phys. Rev. Lett.}, 98:240401, 2007.

\bibitem{PDB15}
J.-M. Pirkkalainen, E.~Damsk\"agg, M.~Brandt, F.~Massel, and M.~A.
  Sillanp\"a\"a.
\newblock Squeezing of quantum noise of motion in a micromechanical resonator.
\newblock {\em Phys. Rev. Lett.}, 115:243601, 2015.

\bibitem{PCZ96}
J.~F. Poyatos, J.~I. Cirac, and P.~Zoller.
\newblock Quantum reservoir engineering with laser cooled trapped ions.
\newblock {\em Phys. Rev. Lett.}, 77(23), 1996.

\bibitem{RMD89}
M.~D. Reid.
\newblock Demonstration of the {Einstein-Podolsky-Rosen} paradox using
  nondegenerate parametric amplification.
\newblock {\em Phys. Rev. A}, 40:913--923, 1989.

\bibitem{RLSDM15}
A.~Roy, Z.~Leghtas, A.~D. Stone, M.~Devoret, and M.~Mirrahimi.
\newblock Continuous generation and stabilization of mesoscopic field
  superposition states in a quantum circuit.
\newblock {\em Phys. Rev. A}, 91(1):013810, 2015.

\bibitem{SLBRR12}
A.~Sarlette, Z.~Leghtas, M.~Brune, J.~M. Raimond, and P.~Rouchon.
\newblock Stabilization of nonclassical states of one- and two-mode radiation
  fields by reservoir engineering.
\newblock {\em Physical Review A}, 86(1):012114, 2012.

\bibitem{SRBR11}
A.~Sarlette, J.~M. Raimond, M.~Brune, and P.~Rouchon.
\newblock Stabilization of nonclassical states of the radiation field in a
  cavity by reservoir engineering.
\newblock {\em Phys. Rev. Lett.}, 107:010402, 2011.

\bibitem{SDZ11}
C.~Sayrin, I.~Dotsenko, X.~Zhou, B.~Peaudecerf, T.~Rybarczyk, S.~Gleyzes,
  P.~Rouchon, M.~Mirrahimi, H.~Amini, M.~Brune, J.-M. Raimond, and S.~Haroche.
\newblock Real-time quantum feedback prepares and stabilizes photon number
  states.
\newblock {\em Nature}, 477:73--77, 2011.

\bibitem{SDCZ01}
A.~Sorensen, L.~M. Duan, J.~I. Cirac, and P.~Zoller.
\newblock Many-particle entanglement with {Bose-Einstein} condensates.
\newblock {\em Nature}, 409(6816):63--66, 2001.

\bibitem{TRK92}
R.~J. Thompson, G.~Rempe, and H.~J. Kimble.
\newblock Observation of normal-mode splitting for an atom in an optical
  cavity.
\newblock {\em Phys. Rev. Lett.}, 68:1132--1135, 1992.

\bibitem{VMC08}
H.~Vahlbruch, M.~Mehmet, S.~Chelkowski, B.~Hage, A.~Franzen, N.~Lastzka,
  S.~Go\ss{}ler, K.~Danzmann, and R.~Schnabel.
\newblock Observation of squeezed light with 10-{dB} quantum-noise reduction.
\newblock {\em Phys. Rev. Lett.}, 100(3):033602, 2008.

\bibitem{VMD16}
H.~Vahlbruch, M.~Mehmet, K.~Danzmann, and R.~Schnabel.
\newblock Detection of 15 {dB} squeezed states of light and their application
  for the absolute calibration of photoelectric quantum efficiency.
\newblock {\em Phys. Rev. Lett.}, 117:110801, 2016.

\bibitem{VKL13}
B.~Vlastakis, G.~Kirchmair, Z.~Leghtas, S.~E. Nigg, L.~Frunzio, S.~M. Girvin,
  M.~Mirrahimi, M.~H. Devoret, and R.~J. Schoelkopf.
\newblock Deterministically encoding quantum information using 100-photon
  schr{\"o}dinger cat states.
\newblock {\em Science}, 342(6158):607--610, 2013.

\bibitem{WSB04}
A.~Wallraff, D.~I. Schuster, A.~Blais, L.~Frunzio, R.~S. Huang, J.~Majer,
  S.~Kumar, S.~M. Girvin, and R.~J. Schoelkopf.
\newblock Strong coupling of a single photon to a superconducting qubit using
  circuit quantum electrodynamics.
\newblock {\em Nature}, 431(7005):162--167, 2004.

\bibitem{WDD08}
T.~Werlang, A.~V. Dodonov, E.~I. Duzzioni, and C.~J. Villas-B\^oas.
\newblock Rabi model beyond the rotating-wave approximation: Generation of
  photons from vacuum through decoherence.
\newblock {\em Phys. Rev. A}, 78:053805, 2008.

\bibitem{WLW15}
E.~E. Wollman, C.~U. Lei, A.~J. Weinstein, J.~Suh, A.~Kronwald, F.~Marquardt,
  A.~A. Clerk, and K.~C. Schwab.
\newblock Quantum squeezing of motion in a mechanical resonator.
\newblock {\em Science}, 349(6251):952--955, 2015.

\bibitem{WHZ81}
W.~H. Zurek.
\newblock Pointer basis of quantum apparatus: Into what mixture does the wave
  packet collapse?
\newblock {\em Phys. Rev. D}, 24(6):1516--1525, 1981.

\bibitem{ZWH03}
W.~H. Zurek.
\newblock Decoherence, einselection, and the quantum origins of the classical.
\newblock {\em Rev. Mod. Phys.}, 75(3):715--775, 2003.

\end{thebibliography}

\end{document}